\newcommand{\semb}{[ \! [}
\newcommand{\seme}{] \! ]}
 \newcommand\ForAuthors[1]
\newcommand{\footremember}[2]{%
    \footnote{#2}
    \newcounter{#1}
    \setcounter{#1}{\value{footnote}}%
}
\newtheorem{definition}{Definition}
\newtheorem{lemma}{Lemma}
\newtheorem{theorem}{Theorem}
\newtheorem{corollary}{Corollary}
\newtheorem{proof}{Proof}
\def\cG{{\mathcal G}}
\def\cI{{\mathcal I}}
\def\cK{{\mathcal K}}
\def\cP{{\mathcal P}}
\def\cR{{\mathcal R}}
\def\cT{{\mathcal T}}
\def\Gz{\cG_0}
\def\Gp{\cG_P}
\def\Gout{\cG_{out}}
\def\Gfull{\cG_{full}}
\newcommand{\I}[1]{\mathit{#1}}
\DeclareMathOperator{\Lit}{Lit}
\newcommand{\Msf}{\textsf{M}}
\newcommand{\Ssf}{\textsf{S}}
\newcommand{\ssf}{\textsf{s}}
\newcommand{\pre}{\textsf{pre}}
\newcommand{\Lcal}{\mathcal{L}}
\newcommand{\la}{\langle}
\newcommand{\ra}{\rangle}
\newcommand{\sched}{{\sf Sched}}
\newcommand{\commentA}[1]{}
\newcommand{\RM}[1]{{\rm{#1}}}
\newcommand{\tsf}{\textsf{t}}
\begin{document}
\title{Models of fault-tolerant distributed computation via dynamic epistemic logic}
\author{Eric Goubault\footremember{X}{LIX, Ecole Polytechnique, CNRS, Universit\'e Paris-Saclay, 91128 Palaiseau, France \texttt{goubault@lix.polytechnique.fr}} \and 
Sergio Rajsbaum\footremember{UNAM}{Instituto de Matematicas, UNAM, Ciudad Universitaria Mexico 04510, Mexico \texttt{rajsbaum@im.unam.mx}}}

\maketitle

\begin{abstract}
  The computability power of a \emph{distributed computing model} is determined by the communication media available to the processes,
  the timing assumptions about processes and communication, and the nature of failures that processes  can suffer. 
     In a companion paper we showed how dynamic epistemic logic can be used to give a formal semantics
     to a given distributed computing model, to capture precisely the knowledge needed to solve a distributed \emph{task},
     such as consensus.
     Furthermore, by moving to a dual model
     of epistemic logic defined by simplicial complexes,  topological invariants are exposed, which determine task solvability.
       In this paper we show how to extend the setting above to include in the knowledge of the processes,  knowledge about the model of computation itself.
       The extension describes  the knowledge processes gain  about the current execution,  in problems
       where processes  have no input values at all. 
\end{abstract}
%


\maketitle

\newpage

\tableofcontents

\newpage


\section{Introduction}

Dynamic epistemic logic (DEL) considers multi-agents systems and studies 
how knowledge changes when communication events occur.
An epistemic  S5 model is typically used to represent  states of a multi-agent system, where edges 
 of the Kripke structure are labeled
with the agents that do not distinguish between the two states.
A Kripke model represents the knowledge of the agents about an initial situation,
and an \emph{action model} represents their knowledge about the possible events taking place in this situation.
A product update operator  defines the Kripke model that results as a 
  consequence of executing actions on the initial  model.
   In the simplest case, public announcement to all the agents of a formula $\psi$ are considered,
   but there is a  general logical language to reason about information and knowledge change~\cite{baltag&:98,BaltagM2004}
   to represent the execution of actions that are indistinguishable to a process.
 
  We are interested in using DEL to study the computability power of a \emph{distributed computing model}.
  It is known that the computability power of a model is determined by the communication media available to the processes,
  the timing assumptions about processes and communication, and the types of failures that processes 
  can suffer.
  The basic model consists of a set of processes communicating by writing and reading shared registers, each process runs at
  its own speed that can vary and is independent of other processes speed, and any number of processes can fail by crashing.
   A task, such as \emph{consensus}, 
     is defined by possible input values to the processes, 
     output values to be produced at the end of the protocol, and an input/output relation. 
 The \emph{wait-free theorem} of~\cite{1999TopologicalStructureAsynchronous_HS} characterizes the \emph{tasks} that are solvable in this model, by
exposing the intimate relation between topology and distributed computing.
It shows that the topology of the input complex is fully preserved after a read/write wait-free protocol is executed, and paved the way
     to show that other models also carry topological information that determines their computability power; for an overview of the theory see~\cite{HerlihyKR:2013}.

  In a companion paper~\cite{ericSergioDEL1-2017} we show how DEL can be used to give a formal semantics
     to a given distributed computing model,  capturing precisely the knowledge needed to solve a {task}. 
%
To expose the  underlying topological invariants induced by the action model,
a \emph{simplicial complex} model corresponding,
in a precise categorical sense, to the dual of the Kripke structure is used. In the figure below, $I$ is the \emph{input model}, an initial
epistemic simplicial complex model (equivalent to Kripke model), and the \emph{protocol model} $P$ is the product with an action model,
representing the knowledge gained after a certain number of communication steps.
The action model preserves topological invariants from the initial model $I$  to the complex $P$ after
the communication actions have taken place.
We explored a class of action models that \emph{fully} preserve the topology of the initial complex.
\begin{wrapfigure}[9]{r}{-0.4\textwidth}
 \centering
\begin{tikzpicture}
  \node (s) {$P$};
  \node (xy) [below=2 of s] {$\Delta \subseteq I \times O$};
  \node (x) [left=of xy] {$I$};
  \draw[<-] (x) to node [sloped, above] {$\pi_I$} (s);
  \draw[->, dashed, left] (s) to node {$h$} (xy);
  \draw[->] (xy) to node [below] {$\pi_I$} (x);
\end{tikzpicture}
\end{wrapfigure}
For a given task, we  defined another \emph{knowledge goal} action model, that when  used
to make the product with the initial epistemic model, yields an epistemic model $\Delta$
representing what the agents should be able to know to solve the task, after applying the communication action model.
There is sufficient knowledge in $P$ if there exists a (properly defined)  morphism $h$ from $P$ to  $\Delta$ that makes the diagram (of underlying Kripke frames) commute.

\paragraph*{Motivation}
While many distributed problems have  the flavor of a distributed function, and can be defined as a task,
some  actually do not refer to input/output relations, but to properties about the execution
itself;  processes have no inputs at all.
All these problems share a striking commonality.
There is just one initial state of the system, and the initial Kripke model of the setting in~\cite{ericSergioDEL1-2017} consists of just one state.
Indeed, all processes have exactly the same knowledge initially.
Then, no matter what the action model is, in the final protocol Kripke model after any number of communication steps, 
the processes gain no knowledge. 
This indeed implies that there is nothing the processes can compute in terms of producing outputs from input values.
This  contradicts the fact that there are wait-free distributed algorithms for many inputless problems.

\paragraph*{Contributions}
Indeed, the problem above is that the processes do not know the model of computation itself.
We propose a novel way of using DEL to encode a given model of computation as knowledge that the processes have
in the initial Kripke model. The processes know the model in the sense that they know which actions the environment
can take, and their structure. 
Then, in the protocol Kripke model, processes do gain knowledge about the execution. 
How much knowledge is determined by the model of computation itself, encoded in the action model.
Then we can indeed prove, that an {inputless problem} is solvable if and only if in the protocol Kripke model
the processes gained sufficient knowledge to solve it.

In this paper we work out only the case of  \emph{inputless tasks,} where each process should produce a single output value.
We first present the setting using Kripke models, and then we observe that by moving from Kripke models to their dual as in ~\cite{ericSergioDEL1-2017},  
topological invariants are exposed which determine the solvability of a given inputless task.
Also, we concentrate on iterated models, where the setting becomes very elegant, due to their recursive nature.
Furthermore, for concreteness, we work with the case where processes communicate by  a sequence of shared arrays~: 
they all go through the same sequence in the same order, i.e. in each one, they first write a value and then they take an
atomic snapshot of the array. 

In this setting our action model becomes very simple. The environment can schedule the processes
to do their operations in a given round, by deciding an  interleaving of their write and read operations.
We call such an interleaving a joint action of the environment.
To be more specific, the initial Kripke model (that we call $M_0$ in the sequel) 
contains one state for each such joint action. The atomic proposition associated 
with each of these states 
means, informally, that the processes consider the joint action as a possible future event of the environment.
Furthermore, in the input model $M_0$ all states are indistinguishable to all processes.
Now, the action model $A$ has also one action point for each possible joint action, but here the accessibility relation
is non-trivial. Two states are related to process $i$ if the process could not distinguish which of the two joint actions
actually took place. Then, the product of $M_0$ and $A$ gives a protocol model $M_1$, which then should
have sufficient knowledge to solve the task.


\paragraph*{Related work}
Seminal work on knowledge and distributed systems is of course one of the inspirations
of the present work (and of~\cite{ericSergioDEL1-2017}), e.g. \cite{FHMVbook}, as well as the  combinatorial topology approach for fault-tolerant distributed computing,
see e.g. \cite{HerlihyKR:2013}. 
But the authors know no previous work on relating the combinatorial
topological methods of \cite{HerlihyKR:2013} with Kripke models.
It should be mentioned though that between Kripke models and interpreted systems have also been compared, from a categorical perspective in e.g.~\cite{PORTER2004235}.
 
In this paper  we use dynamic epistemic logic (DEL)~\cite{sep-dynamic-epistemic,DEL:2007}.
Complex epistemic actions can be
represented in \emph{action model logic} \cite{BaltagM2004,DEL:2007}. Various examples of epistemic actions have been considered, especially 
\emph{public announcement logic},  a well-studied example of DEL, with many applications in dynamic logics, knowledge representation and other
formal methods areas. However, to the best of our knowledge, it has not been used in distributed computing theory, where fault-tolerance
is of primal interest. 
DEL~\cite{BaltagM2004,DEL:2007}
extends epistemic logic through dynamic operators formalizing
information change.
Plaza~\cite{plaza:89} first extended epistemic logic to model public
announcements, where the same information is transmitted to all
agents.
Next, a variety of approaches (e.g.,~\cite{baltag&:98,DEL:2007}) generalized
such a logic to include communication that does not necessarily reach
all agents.
Here, we build upon the approach developed by Baltag et
al.~\cite{baltag&:98} employing action models.
We have focused in this paper on the classical semantics of multi-modal S5 logics. 
%
 
Many inputless problems have been considered in the past.
For  example, the \emph{participating set} problem~\cite{BorowskyG1993} and its variants, where
 processes should produce as output sets
of processes ids that they have seen participating in an execution, such that any two such sets can be ordered by containment,
plays a role in the set agreement impossibility proof of~\cite{1993GeneralizedFLPImposibility_BG,SaksZ00}.
Other examples  of inputless problems include 
 the timestamp object of~\cite{EFR08} (called weak counter in \cite{GR08}).
A weak counter provides a single operation, \texttt{Get-Timestamp}, which returns an integer. It has the property that if one operation 
precedes another, the value returned by the later operation must be larger than the
value returned by the earlier one. (Two concurrent \texttt{Get-Timestamp} operations may return the same value.)
 Also, the \texttt{test\&set} object,
specifying that in any execution exactly one process should output $1$ and the others should output $0$.

\section{Distributed systems background}
We describe our setting in a concrete family of models, that are of interest in distributed computing. 
We first recall some basic  notions about shared memory computation, e.g.~\cite{2004dc_AW,HSbook}.
Then we represent the executions of a model in a  state/transition framework adapting
 adapt  the model of \cite{MosesR2002} (in turn following  the style of \cite{FHMVbook}).
 
 \subsection{Distributed computing models}
 \label{subsec:model}
 
 Our basic model  is the   {\it one-round read/write} asynchronous model, $\mathsf{WR}$.
It  consists of $n+1$ processes  denoted by the numbers $[n]=\{0,1,\ldots,n\}$, referred to as \emph{ids}.
 A process is a deterministic (possibly infinite) state machine. 
Processes  communicate through a shared memory array $\mathsf{mem}[0\ldots n]$ which consists of $n+1$ single-writer/multi-reader atomic registers. Each process  accesses the shared memory by invoking the atomic operations $\mathsf{write}(x)$ or $\mathsf{read}(j)$, $0\leq j\leq n$. The $\mathsf{write}(x)$ operation is used by process $i$  to write  value $x$ to its own register, $i$, and process $i$  can invoke $\mathsf{read}(j)$ to read register $\mathsf{mem}[j]$, for any $0\leq j\leq n$. 
 Any interleaving of the $\mathsf{write}()$ and $\mathsf{read}(j)$ operations of the processes is  possible.
The protocol $D$ that the processes execute represent their state machines, they 
define the next operation to execute, and what to remember.  To have concrete examples, it is convenient
to assume the protocol has the following canonical form.
 In its first operation, process $i$ writes a value  to $\mathsf{mem}[i]$, then 
 a process  reads each of the $n+1$ registers, in an arbitrary order. 
 Such a sequence of  read operations, is abbreviated by  $\mathsf{Collect}()$, and when it is preceded by a  $\mathsf{write}(x)$
 it is  abbreviated by  $\mathsf{WCollect}(x)$. 
In the  { one-round read/write} asynchronous model, $\mathsf{WR}$, the protocol of each process consists
of a single  $\mathsf{WCollect}(x)$.
A protocol in this canonical form has to determine only 
the values the processes write to the shared memory,
and what do they remember about the values read from the shared memory, but the next operation to execute
is determined by the round structure of the $\mathsf{WR}$ model.
 More generally, in the $N$-{\it multi-round read/write} model $\mathsf{WR}$, the program of every process consists
of a sequence of $N$  $\mathsf{WCollect}()$  operations.


The \emph{iterated} $\mathsf{WR}$ model, is obtained by composing the one-round $\mathsf{WR}$ model $N$ times.   
Processes communicate through a sequence of arrays, $\mathsf{mem}_1$, $\mathsf{mem}_2\ldots,\mathsf{mem}_N$.
 They all go through the sequence of arrays, executing a
single $\mathsf{WCollect}()$ operation on $\mathsf{mem}_r$, for each $r\geq 0$.
Namely,  each process $i$ executes one write to $\mathsf{mem}_r[i]$  and then reads one by one all entries $j$, $\mathsf{mem}_r[j]$, 
in arbitrary order, before proceeding to do the same on $\mathsf{mem}_{r+1}$.
Again, any interleaving of the operations of the processes is possible.

Several sub-models have been considered in the literature, equivalent to each other in terms of their task computability power (for an overview of such results see~\cite{HerlihyKR:2013}).
The \emph{snapshot} version of the previous  models, is obtained by replacing
 the $\mathsf{WCollect}()$ by a $\mathsf{WSnap}()$ operation, that guarantees that the reads 
 of  the $n+1$ registers happen all atomically, at the same time. To obtain  versions that tolerate $t$ processes crashing,
 in each round, a process writes a value and then repeatedly reads the shared memory (either using a collect or snapshot, depending on the model)  until it sees that at least
 $n+1-t$ processes have written a value for that round.
Finally, the more structured \emph{immediate snapshot} $\mathsf{WR}$ models,
are such that executions are organised in concurrency classes (here a $t$-resilient version is not obtained directly~\cite{tResIS-DFRR2016}). 
Each concurrency class consists of a set of processes,
that are all scheduled to do their write operations concurrently, and then they all execute a snapshot operation concurrently.

 In this paper we assume processes have no inputs; all processes are in the same initial state (differing only in their process ids).

 \subsection{States and actions}
 \label{sec:statActs}

 In addition to the  set of processes,  $0,1,2,\ldots, n$, there is 
an {\em environment,} denoted by~$e$, which is
used to model the  shared memory, as well as the scheduling of the operations of the processes.
For every  $i\in\{ e,0,1,\ldots,n\}$, there is
a set~$L_i$ consisting of all possible {\em local states} for~$i$.
The set of {\em global states,}  simply called {\em states},
 consist of
$
\cG\ = \ L_e\times L_0\times\cdots\times L_n.
$
We denote by $x_i$ the local state of~$i$ in the state~$x$.
Notice that in the above models, given a state $x$, that includes the contents of the shared-memory in the environment's state,
scheduling a set of processes to execute their next operations, uniquely determines the subsequent state of the system~\footnote{
When  non commuting operations are included in the schedule, we assume some fixed a priori ordering to execute them, say always first
the writes and then the reads.}.
This is the state resulting by executing 
the next operations by  the scheduled processes,  and updating the environment's state to reflect the contents of the shared-memory accordingly,
and the new local states of the processes that executed an operation.
A {\em scheduling action} is a set $\sched\subseteq \{0,\ldots,n\}$
of the processes that are scheduled to move next.
Thus, a \emph{run} of a deterministic protocol~$D$ can be represented in the form
$x\odot sa_1\odot sa_2\odot\cdots$ where $x$ is
an initial state and   $sa_i$ is a scheduling action
for every integer $i\ge 0$.
  An {\em execution} is a  subinterval of a run,
starting and ending in a state.
Sometimes it is convenient to talk about \emph{composed scheduling actions}, consisting
of a sequence of scheduling actions $[ sa_1,sa_2,\ldots,sa_k]$.
Given an execution $R$ (possibly consisting of just one state), and a (possibly composed) scheduling action $sa$,
 $R\odot sa$ denotes the execution that results from
extending $R$ by  performing the (composed) scheduling action~$sa$.


Let $\mathsf{Sch}$ be the set of all infinite schedules of a given distributed computing model,
and let $N-\mathsf{Sch}$ the set of all the prefixes, where every process takes $N$ steps.
The set of all $N$-step schedules for the write-collect models is denoted $\mathsf{WR}_R$, and can be viewed
as  all permutations of the process ids, where each id appears exactly $N$ times (although we
often organise such a permutation as a sequence of sets, a composed scheduling action).
The set of all $N$-step schedules for the other models are denoted analogously, $\mathsf{IS}_R$
and  $\mathsf{IIS}_R$. Thus, when $N$ is clear from the context, we omit it.

Finally, we   define an $N$-step  \emph{model aware system} that will serve us well  to define an epistemic model, by including in the environment's state
 the schedule itself. This implies that the environment also runs a deterministic protocol.
 Namely, for each possible $N$-step schedule in the model, there is one initial state of the environment. 
 All initial states of the environment have the shared memory empty. There is a single initial state of each process $i$, associated to its id $i$. Thus, the \emph{initial states}~$\Gz$ of a model are as follows. 
 The initial states of all processes are identical, except that the initial state of process $i$
contains its id $i$. The initial states of the environment are in a 1 to 1 correspondence to all possible 
composed scheduling actions of the model.
For each $N$-step composed scheduling 
action   there
is an initial state of the environment. In addition, the environment's state encodes that the shared memory is initially empty.
Thus, each state in  $\Gz$ can be denoted as $x=([s_0,\ldots,s_k], q_0,q_1,\ldots,q_n)$,
where $x_e=[s_0,\ldots,s_k]$ is an initial state of the environment (specifying that each process takes $N$ steps,
and in which order) and $x_i=q_i$ is the initial state of process $i$. 
The only scheduling action that can be applied to $x$ is $[s_0,\ldots,s_k]$, and once a distributed protocol $D$
is fixed, it defines a unique
 execution,  $x\odot [s_0,\ldots,s_k]$ whose last state is a \emph{$N$-step protocol state}. The set of all such
 states is denoted $\Gp$.

 \section{Distributed computing with DEL}
\label{sec:modelDEL}
The distributed computing modeling in Section~\ref{subsec:model} is based on executions.
Here we rephrase it using Kripke models.

\subsection{Kripke frames}


A Kripke frame is defined in terms of a set of  (global) states, together with the following accessibility
relation.  
Two states $u, v \in S$ are  \emph{indistinguishable} by $a$, $u \sim_a v$, 
if and only if the state of process $a$ is the
same in $u$ and in $v$.
 Notice that $u \sim_a v$ defined this way is indeed an equivalence relation.
 
Let $M=\la S, \sim^A \ra$ and $N=\la T,\sim^A \ra$ be two Kripke frames. A \emph{morphism} of Kripke
frame  $M$ to $N$ is a function $f$ from $S$ to $T$ such that for all $u, v \in S$, for all $a \in A$, 
$u \sim_a v$ implies $f(u) \sim_a f(v)$. 
It easy is to check that   morphisms compose.
We call ${\cK}$ the category of Kripke frames, with morphisms of Kripke frames.
 This category enjoys many interesting properties, among
which the fact that cartesian products exist.
Let $M=\la S, \sim^A \ra$ and $N=\la T,\sim^A \ra$ be two Kripke frames, and
define $M\times N=\la U, \sim^A\ra$ as follows~: states $U$ are pairs $u=(s,t)$ of states
$s\in S$ and $t\in T$ and the accessibility relation is defined as
$(s,t) \sim_a (s',t')$ if and only if $s\sim_a s'$ and $t \sim_a t'$.

\begin{lemma}
\label{lem:Kprod}
 The product of Kripke
frames  is the cartesian product in the categorical sense, coming with projections $\pi_M : M\times N
\rightarrow M$ and $\pi_N : M\times N \rightarrow N$, which are morphisms of Kripke frames.
\end{lemma}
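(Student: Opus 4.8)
To prove that $M\times N$ is the categorical product in $\cK$ I must supply two things: (i) the projections $\pi_M:M\times N\to M$ and $\pi_N:M\times N\to N$, together with a check that they are morphisms of Kripke frames; and (ii) the universal mapping property, namely that for every Kripke frame $Z$ and every pair of morphisms $f:Z\to M$, $g:Z\to N$ there is a \emph{unique} morphism $h:Z\to M\times N$ with $\pi_M\circ h=f$ and $\pi_N\circ h=g$. A preliminary point is to note that $M\times N$ is a well-defined object of $\cK$: writing $M\times N=\la U,\sim^A\ra$ with $U=S\times T$, each $\sim_a$ on $U$ is reflexive, symmetric and transitive because it is defined coordinatewise and $\sim_a$ on $S$ and on $T$ are equivalence relations (as already observed).

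\textbf{Projections.} Take $\pi_M:U\to S$, $(s,t)\mapsto s$, and $\pi_N:U\to T$, $(s,t)\mapsto t$. If $(s,t)\sim_a(s',t')$ in $U$, then by the very definition of the accessibility relation on $M\times N$ we have $s\sim_a s'$, i.e.\ $\pi_M(s,t)\sim_a\pi_M(s',t')$; hence $\pi_M$ is a morphism of Kripke frames, and the argument for $\pi_N$ is symmetric.

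\textbf{Universal property.} Let $Z=\la R,\sim^A\ra$ be a Kripke frame with morphisms $f:Z\to M$ and $g:Z\to N$. Since any $h$ with $\pi_M\circ h=f$ and $\pi_N\circ h=g$ must send $r$ to a pair whose first coordinate is $f(r)$ and whose second coordinate is $g(r)$, the only candidate is $h(r)=(f(r),g(r))$; this already yields uniqueness of the underlying function, hence of the morphism. It remains to check $h$ is a morphism and that it makes the triangles commute. Commutation is immediate from the definition of $h$. For the morphism property, suppose $r\sim_a r'$ in $Z$; then $f(r)\sim_a f(r')$ and $g(r)\sim_a g(r')$ because $f$ and $g$ are morphisms, and therefore $(f(r),g(r))\sim_a(f(r'),g(r'))$ by the definition of $\sim_a$ on $M\times N$, i.e.\ $h(r)\sim_a h(r')$. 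This completes the verification.

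\textbf{Main difficulty.} There is no real obstacle here: the whole proof is a routine coordinatewise check. The only point worth flagging is where the precise form of the accessibility relation on $M\times N$ is used. Its ``only if'' direction ($(s,t)\sim_a(s',t')$ implies $s\sim_a s'$ and $t\sim_a t'$) is exactly what makes the projections morphisms, and its ``if'' direction ($s\sim_a s'$ and $t\sim_a t'$ implies $(s,t)\sim_a(s',t')$) is exactly what is needed to see that the mediating map $h$ is a morphism; any other relation on $U$ lying strictly between these would break one of the two requirements.
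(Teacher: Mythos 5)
Your proof is correct and complete: the paper states Lemma~\ref{lem:Kprod} without proof (treating it as routine), and your argument—checking that the coordinatewise relation is an equivalence, that the projections are morphisms via the ``only if'' direction of the definition of $\sim_a$ on $M\times N$, and that the unique mediating map $h(r)=(f(r),g(r))$ is a morphism via the ``if'' direction—is exactly the standard verification the authors are implicitly relying on. Nothing is missing.
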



%


\subsection{Distributed computability in terms of Kripke frames}
\label{sec:tasksolvability}

Fix a {model of distributed computation} with 
 $N$- step schedules $\mathsf{Sch}$, and the corresponding
initial states $\Gz$ as defined at the end of Section~\ref{sec:statActs}.
The \emph{initial Kripke frame} is  $\cI=\la \Gz, \sim^A \ra$. Notice, that no process can distinguish between
two states in $ \Gz$, while the environment always distinguish them.

Given a deterministic protocol $D$, and an integer $N$,
the   \emph{protocol Kripke frame} is $P=\la \Gp,\sim^A \ra$,
where $\Gp$ consists of all the states at the end of all $N$-step executions starting in $\Gz$.
Abusing notation, we use $\cP$ as the function which sends each $x\in\Gz$ to the state $x'\in\Gp$
obtained by executing the schedule $x_e$ defined by the environment in the initial state $x$.
Examples of protocol Kripke frames are in Section~\ref{sec:apps}.

We define a \emph{inputless task} $\cT=\la \Gz,\Gout,\Delta \ra$ in terms of an \emph{output Kripke frame,} $\Gout$
where each element $x\in \Gout$ defines a value to be decided by each process.
 (the environment is not part of $x$).
The relation\footnote{In previous papers, the  problem is a task, and hence $\Delta$ is a carrier map.} 
$\Delta$ is from $\Gz$ to ${\Gout}$.
Let $x\in  \Gz$ and $x'$ any state in $\Delta(x)$.
Then,  in an execution with schedule indicated by $x_e$,
it is valid for the processes to decide   $x'$, i.e., each $i$ decides 
$x'_i$.


Protocol $D$ \emph{solves in $N$ steps} the inputless task $\cT=\la \Gz,\Gout,\Delta \ra$  if
there exists  a  morphism $f$ from $P=\la {\Gp}, \sim^A \ra$ to  $O=\la {\Gout},\sim^A \ra$ such that the composition of  $P$
and $f$ belongs to $\Delta$, i.e.,  $f(P(x))\in\Delta(x)$.

Let us discuss this definition. 
If $f(u)=x$  then indeed each process $i$ can decide (operationally, in its program)
the value  $x_i$, because the value is a function only of its local state: 
 if $u \sim^i v$  then $f(u)\sim^i f(v)$, namely, $f(u)_i= f(v)_i$.
Second, these decisions are respecting the task specification, because if we consider an initial state $s_0$, then the execution
starting in $s_0$ ends in state  $s= \cP(s_0)$, which is then mapped to a state $t=f(s)$, with $t\in \Delta(s_0)$.
Finally, if such a  morphism  $f$ does not exist, then it is impossible to solve the task in $N$ steps by a deterministic
protocol $D$ in model $M$, because any such protocol would actually be defining the required morphism.

\begin{figure}
\begin{center}
\begin{tikzpicture}
  \node (s) {$\Gfull$};
  \node (xy) [below=2 of s] {$\Gp$};
  \node (x) [left=of xy] {$\Gz$};
    \node (xz) [below=of xy] {$\Gout$};
  \draw[->] (x) to node [sloped, above] {$F$} (s);
  \draw[->, dashed, right] (s) to node {$f$} (xy);
  \draw[<-] (xy) to node [above] {$P$} (x);
    \draw[->, dashed, right] (xy) to node {$f'$} (xz);
      \draw[->] (x) to node [sloped, above] {$\Delta$} (xz);
\end{tikzpicture}
\end{center}
\caption{Full-information protocol is initial among protocols}
\label{fig:initial}
\end{figure}
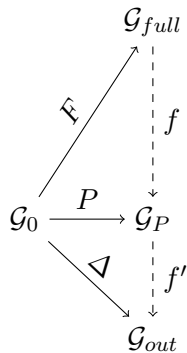
Consider any protocol $D$, and its protocol Kripke frame after $N$ steps, $P=\la {\Gp}, \sim^A \ra$,
and let $F$ be the \emph{full information protocol}, where each time a process writes, it writes its local state,
and when it reads, it concatenates the value read to its local state.
Let $F=\la {\Gfull}, \sim^A \ra$ be its $N$ step protocol Kripke frame. As the Lemma below
formalizes it, and as observed in a slightly different formal context in \cite{Sam}, the ($N$ step)
full information protocol
is \emph{initial} in the wide sub-category of Kripke frames consisting of ($N$ step) protocols (see Figure \ref{fig:initial})~: 

\begin{lemma}
\label{lem:fullInfo}
Protocol~$F=\la {\Gfull}, \sim^A \ra$
solves $\la \Gz,\Gp,P \ra$. 
\end{lemma}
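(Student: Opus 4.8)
By the definition of solvability in Section~\ref{sec:tasksolvability}, unravelling $\la\Gz,\Gp,P\ra$ (so the output frame is $\Gp$ and the relation is the functional one $\Delta = P$), what has to be produced is a morphism of Kripke frames
$f\colon \la\Gfull,\sim^A\ra \to \la\Gp,\sim^A\ra$ with $f(F(x)) \in \Delta(x)$, i.e.\ $f(F(x)) = P(x)$, for every $x\in\Gz$ --- writing, as in Section~\ref{sec:tasksolvability}, $F$ and $P$ also for the protocol functions $\Gz\to\Gfull$ and $\Gz\to\Gp$. Since, by definition of a protocol frame, every state of $\Gfull$ is $F(x)$ for some $x\in\Gz$, the requirement $f\circ F = P$ forces the definition: set $f(u) := P(x)$ for any $x$ with $F(x) = u$. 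Hence everything reduces to a single claim, that for each agent $a\in\{e,0,\ldots,n\}$,
\[
F(x)_a = F(x')_a \quad\Longrightarrow\quad P(x)_a = P(x')_a, \qquad x,x'\in\Gz .
\]
Read for all $a$ at once this gives well-definedness of $f$; read for a single $a$ along $u = F(x)$, $v = F(x')$ it is exactly the morphism condition $u\sim_a v \Rightarrow f(u)\sim_a f(v)$.

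For a process $a = i\in\{0,\ldots,n\}$ the claim says that process $i$'s local state under the arbitrary protocol $D$ is a function of its local state under the full-information protocol $F$ --- the informal fact that $F$ forgets nothing. I would prove it by induction on the round number $r\le N$, producing functions $\phi_i^{\,r}$ with
\[
\big(\text{$i$'s $D$-state after round } r\big) = \phi_i^{\,r}\big(\text{$i$'s $F$-state after round } r\big).
\]
For $r = 0$ both are the initial local state $q_i$, so $\phi_i^{\,0} = \mathrm{id}$. For the inductive step one fixes $x\in\Gz$ and uses that the schedule $x_e$ is \emph{the same} for the $F$- and the $D$-execution from $x$: in round $r+1$ process $i$ performs its write and then its reads in the same order, on the same registers, and with the same pattern of reads returning $\bot$ (all of this is determined by $x_e$ alone, not by the values written). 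By the induction hypothesis applied to each writer $j$ that $i$ reads, the value $i$ reads from $\mathsf{mem}_{r+1}[j]$ in the $D$-execution is a function of the value it reads there in the $F$-execution (namely of $j$'s $F$-state after round $r$). Since $D$ is deterministic, $i$'s round-$(r+1)$ write value is a function of its $D$-state at the start of the round, hence --- again by the induction hypothesis --- of its $F$-state after round $r$; and $i$'s $D$-state after round $r+1$ is a function of its $D$-state at the start of the round together with the tuple of values it read. As $i$'s $F$-state after round $r+1$ retains its $F$-state after round $r$ and records all the values it read in round $r+1$, this yields $\phi_i^{\,r+1}$; taking $r = N$ proves the process case.

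For $a = e$ the argument is easier: $F(x)_e$ records the schedule $x_e$ (the environment runs its own deterministic protocol), so $F(x)_e = F(x')_e$ already forces $x = x'$ and hence $P(x)_e = P(x')_e$; indeed, just as the environment distinguishes all states of $\Gz$, it distinguishes all states of $\Gfull$, so $\sim_e$ is the identity there and this case is vacuous. The displayed implication is thus established for every agent, which simultaneously makes $f$ well-defined and a morphism of Kripke frames; together with $f(F(x)) = P(x)\in\Delta(x)$ this is precisely the assertion that $F$ solves $\la\Gz,\Gp,P\ra$.

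I expect the inductive decoding step to be the only genuine obstacle. The delicate point is to argue cleanly that the $F$-execution and the $D$-execution from a common initial state $x$ have the \emph{same communication pattern} --- the same interleaving of reads and writes, the same "who reads whose register, and whether that register is still empty" --- so that the induction hypothesis on the senders can be fed into the reader; this is exactly where the canonical round structure of the model and the fact that the schedule lives in the (shared) environment state are used, and it is worth stating as a small lemma before the induction.
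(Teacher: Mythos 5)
Your proposal is correct and follows essentially the same route as the paper: you define $f$ by $f(F(x)) = P(x)$ exactly as the paper does, and your round-by-round induction (producing the decoding functions $\phi_i^{\,r}$) is simply a careful expansion of the paper's one-sentence justification that if process $i$ cannot distinguish two states under the full-information protocol, it cannot distinguish them under any other protocol. The extra detail you supply --- well-definedness of $f$ and the ``same communication pattern'' point --- is a legitimate filling-in of what the paper leaves implicit, not a different argument.
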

For the proof, notice that
the functions $F$ and $P$ are determined by the $N$-step schedules.  Thus, given an initial state $x$ with schedule $sc$,
and $y=P(x)$, define $f$ by $f(F(x))=P(x)$. To show that $f$ is a morphism, notice that if $u \sim^i v$ for $u,v\in \Gfull$
then $f(u) \sim^i f(v)$ because if $i$ does not distinguish between $u,v$ in a full information protocol, it certainly
does not distinguish in any other protocol.

\begin{corollary}
\label{lem:fullInfoPower}
If  $\cT=\la \Gz,\Gout,\Delta \ra$ is solvable in $N$ steps by a protocol $\cP=\la {\Gp}, \sim^A \ra$ with morphism $f'$, then it is solvable by 
the $N$-step full information protocol with morphism $f\circ f'$. 
\end{corollary}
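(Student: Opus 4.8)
The plan is to derive the corollary directly from Lemma~\ref{lem:fullInfo} together with the fact, noted just after the definition of morphisms, that morphisms of Kripke frames compose. First I would lay out the two ingredients. By Lemma~\ref{lem:fullInfo}, the $N$-step full information protocol $F=\la \Gfull,\sim^A\ra$ solves $\la \Gz,\Gp,P\ra$; unwinding the definition of ``solves'', this yields a morphism $f\colon \Gfull\to\Gp$ of Kripke frames with $f(F(x))=P(x)$ for every $x\in\Gz$ (this $f$ is exactly the one constructed in the proof sketch of Lemma~\ref{lem:fullInfo}). On the other hand, the hypothesis that $\cP=\la\Gp,\sim^A\ra$ solves $\cT$ with morphism $f'$ means $f'\colon\Gp\to\Gout$ is a morphism of Kripke frames with $f'(P(x))\in\Delta(x)$ for every $x\in\Gz$.

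The second step is to form and check the composite. Since morphisms of Kripke frames compose, $g:=f'\circ f$ (written $f\circ f'$ in the statement, in left-to-right/diagrammatic order) is a morphism $\Gfull\to\Gout$. It remains only to verify the task condition: for any $x\in\Gz$,
\[
g(F(x)) \;=\; f'\bigl(f(F(x))\bigr) \;=\; f'\bigl(P(x)\bigr) \;\in\; \Delta(x),
\]
where the middle equality is Lemma~\ref{lem:fullInfo} and the final membership is the hypothesis on $f'$. Hence $F$ solves $\cT$ in $N$ steps with morphism $g$, which is the claim.

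There is essentially no hard part here — this is a one-line composition argument, which is why it is stated as a corollary rather than a lemma. The only things requiring a bit of care are bookkeeping: keeping straight the direction of each morphism ($f$ goes from the full-information frame to the protocol frame, $f'$ from the protocol frame to the output frame) and matching the composition convention in the statement. One might also add a sentence remarking that $F$ is itself a legitimate $N$-step protocol in the model, so that ``solvable by the full information protocol'' is meaningful, but this is already implicit in the setup preceding Lemma~\ref{lem:fullInfo}.
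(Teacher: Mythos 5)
Your proof is correct and follows exactly the route the paper intends: the corollary is an immediate consequence of Lemma~\ref{lem:fullInfo} plus the closure of Kripke frame morphisms under composition, with the one-line check $f'(f(F(x)))=f'(P(x))\in\Delta(x)$. Your remark about the composition order (the paper writes $f\circ f'$ where the conventional notation would be $f'\circ f$) is a fair and accurate piece of bookkeeping, not a deviation in substance.
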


\subsection{Dynamic epistemic logic}
 Additional background on dynamic epistemic logic
is in Appendix~\ref{app:dynEpLog}.

Let $\I{AP}$ be a countable set of \emph{atomic propositions}.
If $X \subseteq \Lit(\I{AP})$, then $X$ is {\em $\I{AP}$-maximal}  iff
$\forall$ $p \in \I{AP}$, either $p \in X$ or $\neg p \in X$.
Assume a set $A = \{a_0, a_1, \ldots a_n\}$ of $n+1$ agents and 
a countable set $\I{AP}$ of propositional variables.
An \emph{epistemic model}  $M = \la S,\sim^A,L^{\I{AP}} \ra$ consists of a Kripke frame
 $\la S, \sim^A\ra$  and a function
$L^{\I{AP}} : S \to 2^{\Lit(\I{AP})}$ 
 such that 
$\forall s \in S$, $L(s)$ is consistent and $\I{AP}$-maximal.
We will often suppress explicit reference to the sets $\I{AP}$ and $A$,
and denote an epistemic model as $M = \la S,\sim,L \ra$.
The \emph{knowledge} $K_a$ of an agent $a$ with respect to
a state $s$ is the set of formulas which are true in all states
$a$-accessible from $s$.
%

We can organize Kripke models as a category, by defining Kripke model morphisms.
Let $M=\la S, \sim^A,L^{\I{AP}} \ra$ and $N=\la T,\sim^A,L^{\I{AP}} \ra$ be two Kripke
models. A morphism of Kripke models is a
morphism $f$ of the underlying Kripke frames of $M$ and $N$ such that 
$  L^{\I{AP}}(f(s)) \subseteq  L^{\I{AP}}(s) $ for all states $s$ in $S$
(although in the sequel we will have for all our morphisms 
$L^{\I{AP}}(f(s))=L^{\I{AP}}(s)$). 


The following simple lemma from~\cite{ericSergioDEL1-2017} says that morphisms can only ``lose knowledge'' (whereas it
is well-known that $p$-morphisms preserve knowledge \cite{sep-dynamic-epistemic})~: 

\begin{lemma}
\label{prop:prop1}
Consider now two Kripke models  
$M'=\la S',\sim'^A , L'\ra$ and 
$M = \la S,\sim^A, L\ra$, and a 
morphism $f$ from 
$M$ to 
$M'$. 
Then  for every agent $a\in A$, 
for all states $s\in M$, $M',f(s) \models K_a \phi \Rightarrow M,s \models K_a \phi$. 
\end{lemma}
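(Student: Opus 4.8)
The statement asserts that truth of a knowledge formula is \emph{reflected} along $f$: whatever agent $a$ knows at $f(s)$ in $M'$, it already knows at $s$ in $M$. The plan is to prove this by induction on the structure of $\phi$, establishing the slightly more convenient claim that for every state $s \in S$ and every (relevant) formula $\phi$, $M', f(s) \models \phi$ implies $M, s \models \phi$; the lemma is then the instance obtained by taking $K_a\phi$ in place of $\phi$, with the induction hypothesis applied precisely to the subformula $\phi$ at the appropriate states.

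For the base case, let $\phi$ be a literal $p$ or $\neg p$. Since $f$ is a morphism of Kripke models we have $L^{\I{AP}}(f(s)) \subseteq L^{\I{AP}}(s)$; and since both $L^{\I{AP}}(s)$ and $L^{\I{AP}}(f(s))$ are consistent and $\I{AP}$-maximal, this inclusion is in fact an equality (if some literal of $L^{\I{AP}}(s)$ were missing from $L^{\I{AP}}(f(s))$, $\I{AP}$-maximality would force its negation into $L^{\I{AP}}(f(s))$ and hence into $L^{\I{AP}}(s)$, contradicting consistency). So a literal holds at $f(s)$ iff it holds at $s$, which settles the base case in both directions — a point that matters below. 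The conjunction case is immediate from the induction hypothesis. The key case is $\phi = K_a\psi$: assume $M', f(s) \models K_a\psi$ and let $t \in S$ be any state with $s \sim_a t$; by the forth condition in the definition of a morphism of Kripke frames, $f(s) \sim_a f(t)$ in $M'$, hence $M', f(t) \models \psi$; applying the induction hypothesis to $\psi$ at the state $t$ yields $M, t \models \psi$. Since $t$ was an arbitrary $a$-neighbour of $s$, we conclude $M, s \models K_a\psi$, as required.

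The one delicate point — and, I expect, the only real subtlety — is negation, which is exactly why the lemma is about \emph{losing} rather than preserving knowledge (in contrast to $p$-morphisms). Pushing the induction through $\neg\psi$ would require the converse implication $M, s \models \psi \Rightarrow M', f(s) \models \psi$, which for modal $\psi$ needs the ``back'' condition, i.e.\ that $f$ be a $p$-morphism; a plain morphism need not satisfy it, and reflection can genuinely fail on formulas with negated modalities such as $\neg K_a p$. For the application at hand this causes no trouble: an inputless task constrains only the \emph{values decided} by the processes, so the formulas whose knowledge is at stake are Boolean combinations of literals (facts about a state), on which $f$ acts bidirectionally by the base case, and the induction closes without appeal to any negated modality. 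The remaining work is thus just to delimit this fragment cleanly and check that it covers the intended use; the modal step itself is a one-line consequence of the forth condition together with Lemma~\ref{prop:prop1}'s induction hypothesis.
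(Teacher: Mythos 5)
Your proof is correct on the fragment it covers, and it is in fact more careful than the paper's own argument, which is essentially your base case fused with a single unfolding of your modal case: the paper takes $t \sim_a s$, uses the forth condition to get $f(t) \sim_a f(s)$, concludes $M',f(t) \models \phi$ from the semantics of $K_a$, and then passes from $M',f(t)\models\phi$ to $\phi \in L^{\I{AP}}(f(t)) \subseteq L^{\I{AP}}(t)$ to $M,t\models\phi$. That middle step is only valid when $\phi$ is a literal, so the paper is implicitly proving the lemma for $\phi \in \Lit(\I{AP})$ and leaving the extension to compound $\phi$ tacit. Your structural induction makes the extension explicit, your base case (the inclusion $L^{\I{AP}}(f(s)) \subseteq L^{\I{AP}}(s)$ is forced to be an equality by consistency plus $\I{AP}$-maximality) is right, and your diagnosis of where the induction stops is exactly on target: it closes on literals, conjunctions and $K_a$, but not on negations of modal subformulas, and the lemma as literally stated does fail there. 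For instance, take $M$ with a single state $s$, $L(s)=\{p\}$, and $M'$ with states $u,v$, $L'(u)=\{p\}$, $L'(v)=\{\neg p\}$, $u \sim_b v$, $u \not\sim_a v$, and $f(s)=u$; this $f$ is a morphism of Kripke models, yet $M',f(s) \models K_a \neg K_b p$ while $M,s \not\models K_a \neg K_b p$. So your approach buys a precise delimitation of the formulas for which reflection holds (at the cost of restricting the statement), whereas the paper's one-line argument buys brevity by silently treating $\phi$ as atomic; for the intended use in Theorem~\ref{thm:Kripketasksolv2}, where the knowledge at stake concerns decided values, i.e. Boolean combinations of literals, both arguments suffice.
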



\begin{proof}
Recall that~:
$$M,s \models K_a \varphi \mbox{ iff } \mbox{for all } s' \in S : s
\sim_a s' \mbox{ implies } M,s' \models \varphi$$

Consider $t \sim_a s$~:  
$f(t) \sim_a f(s)$ and as $M',f(s) \models K_a \phi$, 
by definition of the semantics of $K_a$ that we recapped above, we know
that $M',f(t) \models \phi$. Therefore $\phi \in L^{AP}(f(t))$, and by definition
of morphisms of Kripke models, $\phi \in L^{AP}(f(t)) \subseteq L^{AP}(t)$. 
So $M,t \models \phi$ and $M,s \models K_a \phi$. 

\end{proof}


We now turn our attention to information change. Recall that in DEL
an \emph{action model} is a structure $\Msf = \mbox{$\la \Ssf,\sim,\pre \ra$}$,
where $\Ssf$ is a domain of \emph{action points}, such that for
each $a \in A$, $\sim_a$ is an equivalence relation on $\Ssf$, and
$\pre : \Ssf \to \Lcal$ is a precondition function that assigns a
\emph{precondition} $\pre(\ssf)$ to each $\ssf \in \Ssf$.
Each action can be thought of as an announcement made by the environment,
which is not necessarily public, in the sense that
not all system agents receive these announcements.

Let $M=\la S, \sim^A,L^{\I{AP}} \ra$ be a Kripke model and $A = \la T,\sim,\pre \ra$
 be an action model. In the
 \emph{product update model}
 \footnote{Usually pointed 
Kripke models and action models are used, but we do not need them here.}
  $M[A]= \la S\times T, \sim^A,L^{\I{AP}} \ra$, 
each world of  $M[A]$ is a pair $(s,t)$ where $s\in S,t\in T$, such that $\pre(t)$ holds in $s$.
Then, $(s,t)\sim_a (s',t')$ if and only if $s\sim_a s'$ 
and $t\sim_a t'$. The valuation of $p$ at a pair $(s,t)$ is as it was at $s$.
Therefore the underlying Kripke frame of the product update model
$M[A]$ is the cartesian product of the underlying Kripke frames of $M$ and
of $A$. 


\subsection{Action models for distributed computing}
\label{sec:logicTasks}

We extend the task formalism in terms of Kripke frames from Section~\ref{sec:tasksolvability} to define
an epistemic model. 

\subsubsection{Action models for inputless tasks}
Let $\Gz$ be the input Kripke frame of all possible $N$-step schedules in a given model.
Consider a {inputless task}, $\cT=\la \Gz,\Gout,\Delta \ra$, where 
  $\Gout$ is the output Kripke frame, and $\Delta$ is a relation from $\Gz$ to $\Gout$.

Define  atomic propositions that state  what the id of a process is, and others that define what the schedule is.
Then, the \emph{input model}  is $I= \la \Gz,\sim^A,L^{\I{AP}} \ra$, where
 $\la \Gz, \sim^A\ra$ is the input  Kripke frame,
  and the function
$L^{\I{AP}} : S \to 2^{\Lit(\I{AP})}$ 
 states that the id of process $i$ is $i$, and that the environment is in state $sc$, for some $N$-step schedule
 of the model.
 

The action model for  $\cT$ is
$\mbox{$\la \Ssf,\sim,\pre \ra$}$, defined as follows.

The {action points} in $\Ssf$, are identified with the states of $\Gout$, so action point $ap=\la d_0,\ldots,d_n \ra$ is 
interpreted as ``agent $i$ decides value $d_i$'', with precondition that is true in every  input state $u$ such
that $ap\in \Delta(u)$. Thus, the action model has as Kripke frame precisely $O=\la \Gout,\sim^A \ra$.

The  \emph{output model} is obtained by the product update of the {input model}   $ \la \Gz,\sim^A,L^{\I{AP}} \ra$
and the action model $\cT = \mbox{$\la \Ssf,\sim,\pre \ra$}$. Each state in the product can be represented
by $\la sch,ap\ra$, where $sch$ identifies an initial state in $\Gz$.
These states are labeled with the same atomic propositions as the initial state of $sch$.
Also, two states  satisfy 
$\la sch,ap\ra\sim_i\la sch',ap'\ra$
 when process $i$ decides the same value, $ap_i = ap'_i$.

\subsubsection{Action models for protocols}
Consider an {input model}   $I= \la \Gz,\sim^A,L^{\I{AP}} \ra$, for the 
 $N$-step schedules in a given model.


The \emph{$N$-step action model} for protocol $D$ is $A = \la R,\sim,\pre \ra$, where each action point in $R$ corresponds
to an $N$-step schedule of the distributed computing model, and $u\sim_i v$ whenever process $i$
ends up in the same state after schedule $u$ and after schedule $v$, running $D$. 
The precondition is the identity, stating that $\pre (sc)$ is true in the initial state of $\Gz$ with schedule $sc$.
Note that when $D$ is the full-information protocol, if $u\sim_i v$,
then in \emph{every} protocol, process $i$ cannot distinguish schedule $u$ from schedule $v$,
and we get the (sort of canonical, or initial as we noticed earlier) \emph{$N$-step action model of the distributed computing model.}

Given an input  model $I= \la \Gz,\sim^A,L^{\I{AP}} \ra$ 
and an action model $A = \la R,\sim,\pre \ra$ for protocol $D$,
we get 
the product update, \emph{protocol Kripke model} $I[A]= \la \Gz\times R, \sim^A,L^{\I{AP}} \ra$,
which is decorated with the atomic propositions from $I$, and its
 Kripke frame is the  protocol Kripke frame.
A state  in the product can be represented  by  $ac$, an $N$-step schedule of the model,
and has the same atomic propositions as the initial state corresponding to $ac$.
Thus, as far as problem solvability is concerned, one can view the action model
as acting on a single initial state of the processes, and replicating it once for each possible $N$-step schedule,
and copying the atomic propositions, and having the same indistinguishability relation as in the action model.

We can thus define formally a model of computation as a set of infinite schedules, together with an indistinguishability
relation defined on finite prefixes of those schedules, induced by the full-information protocol. 
Namely, two things are needed
to completely define a model. First,  a set of schedules, which specify properties such as,   that at most $t$ processes crash, or
some partial synchrony assumption. A schedule is  just a sequence of sets
of agents to be scheduled, and hence the same set of schedules  can be used for different models.
Thus, remarkably, the effect of different models is captured  by $\sim$, 
the relation specifying when a process cannot possibly  distinguish
between two schedules.  The relation  is implementable, in the sense that the full information protocol
 achieves precisely the  relation.
 Then, for each integer $N$ there is a corresponding set
of initial states  $\Gz$, and an $N$-step action model $A = \la R,\sim,\pre \ra$ characterizing the reachable 
states  
in the distributed computing model.
Other protocols are defined by a coarsening of the relation $\sim$ of the full-information protocol, 
see Lemma~\ref{lem:fullInfo}.


In the  following theorem, we lift the task solvability definition of Section~\ref{sec:tasksolvability}
from the category of Kripke frames to the category of Kripke models.
For short, we say that a morphism $h$  from the protocol Kripke 
model $I[A]$ to the output model $I[T]$ \emph{respects} $\Delta$ if 
for each state $x\in  \Gz$ with scheduler $x_e$,  and the corresponding state $x'\in I[A]$,
the state $y=h(x')$, is such that $y\in\Delta(x)$. 

\begin{theorem}
\label{thm:Kripketasksolv2}
Let $\cT=\la \Gz,\Gout,\Delta \ra$ be an inputless task,
and consider the corresponding input model $I$, and protocol action model $A$ 
and task action model $T$.
Then task solvability 
is equivalent to the existence of a Kripke  morphism $h$ from the protocol Kripke 
model $I[A]$ to the output model $I[T]$, that respects $\Delta$.
\end{theorem}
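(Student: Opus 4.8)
The plan is to show each direction of the equivalence by unwinding the definitions and then reducing the Kripke-model statement to the Kripke-frame statement of Section~\ref{sec:tasksolvability}, where the hard computational content already lives. First I would observe that, by construction, the underlying Kripke frame of the protocol Kripke model $I[A]$ is exactly the protocol Kripke frame $P=\la \Gp,\sim^A\ra$ (this is spelled out right after the product update definition: ``its Kripke frame is the protocol Kripke frame''), and likewise the underlying Kripke frame of the output model $I[T]$ is precisely $O=\la \Gout,\sim^A\ra$, since the action points of $T$ are identified with the states of $\Gout$ and the product update with $I$ just replicates the single initial process-state once per schedule. The atomic-proposition labelling on both sides is inherited from the input state corresponding to each schedule, so on $I[A]$ the label of the state indexed by schedule $ac$ records $ac$ and the process ids, and the same is true on $I[T]$ for the state $\la sch,ap\ra$.

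Second, for the direction ``solvability $\Rightarrow$ morphism'', I would start from a protocol $D$ solving $\cT$ in $N$ steps, which by definition (Section~\ref{sec:tasksolvability}) gives a Kripke-frame morphism $f:P\to O$ with $f(\cP(x))\in\Delta(x)$ for every $x\in\Gz$. I would then promote $f$ to a Kripke-model morphism $h:I[A]\to I[T]$ by taking $h$ to be $f$ on the level of states. The only extra thing to check for a Kripke-model morphism is $L^{\I{AP}}(h(s))\subseteq L^{\I{AP}}(s)$; but since both labellings are determined by the schedule and the ids, and $h$ fixes the schedule index, these label sets are in fact equal, so the inclusion holds trivially. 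That $h$ respects $\Delta$ is a restatement of $f(\cP(x))\in\Delta(x)$ using the identification of $\cP(x)$ with the state $x'\in I[A]$ corresponding to $x_e$.

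Third, for the converse ``morphism $\Rightarrow$ solvability'', I would take a Kripke-model morphism $h:I[A]\to I[T]$ respecting $\Delta$, forget the labels to get a Kripke-frame morphism $\bar h:P\to O$, and then use Corollary~\ref{lem:fullInfoPower} together with Lemma~\ref{lem:fullInfo}: it suffices to exhibit a frame morphism from the full-information protocol frame into $O$ lying over $\Delta$, since the full-information protocol is initial among $N$-step protocols. Concretely, if $A$ is the protocol action model of $D$ and $\bar h$ lands correctly, then because in every protocol a process distinguishes fewer pairs of schedules than in the full-information protocol, $\bar h$ precomposed with the canonical map from $\Gfull$ to $\Gp$ is still a morphism; and the $\Delta$-respecting condition is preserved since both maps are determined by the schedule. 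By the definition of ``solves in $N$ steps'' in Section~\ref{sec:tasksolvability}, this is exactly solvability of $\cT$.

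The main obstacle I expect is purely bookkeeping rather than conceptual: making fully precise the several identifications that the prose only sketches --- that $\Ssf\cong\Gout$ as frames, that the product update $I[T]$ collapses to $O$ because $I$ carries essentially one process-state, that the $\sim_i$ relation ``$ap_i=ap'_i$'' on $I[T]$ agrees with $\sim^i$ on $O$, and that the state of $I[A]$ indexed by a schedule is the same object as $\cP(x)$ for the initial state $x$ with $x_e$ that schedule. Once these canonical isomorphisms of frames (and the triviality of the label-inclusion, given that all our morphisms in fact preserve labels on the nose) are nailed down, the theorem is a two-line translation of Section~\ref{sec:tasksolvability}'s definition through Lemma~\ref{lem:Kprod} and the product update construction; I would therefore devote most of the written proof to stating those identifications carefully and then simply invoke the frame-level results and Corollary~\ref{lem:fullInfoPower}.
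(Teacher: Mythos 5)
Your proposal is correct in substance and is, in fact, more explicit than what the paper offers: the paper does not give a formal proof of Theorem~\ref{thm:Kripketasksolv2} at all, only the paragraph following it, which glosses the morphism $h$ in knowledge-theoretic terms via Lemma~\ref{prop:prop1} (``morphisms can only lose knowledge''). Your route --- reduce everything to the frame-level definition of ``solves in $N$ steps'' from Section~\ref{sec:tasksolvability} by noting that the labellings on both sides are inherited from the schedule and hence preserved on the nose --- is the definitional unwinding the paper leaves implicit, and it is the right one. Two points of imprecision are worth fixing. First, the underlying frame of $I[T]$ is \emph{not} literally $O=\la\Gout,\sim^A\ra$: its states are the pairs $\la sch,ap\ra$ with $ap\in\Delta(sch)$, so distinct schedules paired with the same decision vector give distinct, mutually process-indistinguishable states. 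What you actually need is the projection $\pi:I[T]\to O$ (a frame morphism by Lemma~\ref{lem:Kprod}, and exactly the map used in the proof of Theorem~\ref{th:testSetTwo}); in the converse direction take $f=\pi\circ h$, and in the forward direction define $h(x')=\la x_e,f(x')\ra$, observing that this pair exists in the product precisely because the precondition $f(\cP(x))\in\Delta(x)$ holds --- so ``$h$ respects $\Delta$'' and ``$h$ is well defined'' are the same condition. Second, your detour through Corollary~\ref{lem:fullInfoPower} and the initiality of the full-information protocol in the converse direction is unnecessary: once you have $\pi\circ h:P\to O$ lying over $\Delta$, the given protocol $D$ already solves the task by definition. (Initiality is only needed if one reads ``task solvability'' as solvability by \emph{some} protocol while fixing $A$ to be the full-information action model, in which case it belongs in the forward direction, not the converse.)
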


We  can only improve knowledge from
$I$ to $I[A]$ (the protocol should improve knowledge of the processes
about the execution  through 
communication), and by Lemma~\ref{prop:prop1},   the task is solvable if and only if 
enough knowledge is gained, such that
there is $h$ associating states of
$I[A]$ to states of  $I[T]$, in a way that
any formula $\phi$ and  process $i$ (at any state $y$), 
$I[T],y \models K_i \phi$ then $I[A],s \models K_i \phi$, in any $x$ with $h(x)=y$. That is,
at each state $x$, there is
at least as much knowledge
of the processes at $x$ than at $h(x)$.

\subsection{Kripke models and topology}
\label{sec:categories}

We  briefly describe the equivalence of categories between Kripke
models and  simplicial complex models from~\cite{ericSergioDEL1-2017}, and thus
transport  the semantics of distributed systems from
 (proper, i.e. models in which no distinct state are indistinguishable by all agents) Kripke models to simplicial complex models.
 This exposes the fact that  knowledge
exhibits topological invariants, and that the possibility of solving
an inputless task depends on such invariants, by bringing in results  developed elsewhere e.g.~\cite{HerlihyKR:2013}. 

A \emph{$n$-dimensional complex} $C$  is a family of subsets of a set $S$, called \emph{simplexes}, such that
for all $X \in C$, $Y \subseteq X$ implies $Y \in C$, and the largest simplexes have all exactly $n+1$ elements; maximal elements are the \emph{facets},  and because they all have the same size we say the complex is \emph{pure}.
The elements of a simplex are called \emph{vertices}. Each vertex of a simplex will be \emph{colored} with a different process id (through some map $l$).
We will consider \emph{chromatic  simplicial maps}, $f: C \rightarrow D$,  a function that
maps the vertices of a complex $C$ to the vertices of a complex $D$, preserving ids, and such that a if
a set of vertices $s$ are a simplex of $C$ then $f(s)$ is a simplex of $D$.
Let $p{\mathcal CS}$ be the category of pure chromatic $n$-complexes. (see Appendix~\ref{app:combTop} for notations and additional details).   

A Kripke frame for the $n+1$ processes is equivalent to the $n$-dimensional chromatic complex,
where states of the Kripke frame are uniquely identified to facets of the complex. Two facets $s,s'$ share
a vertex with id $i$ iff the associated states in the frame satisfy $s\sim_i s'$.
Kripke morphisms correspond to chromatic simplicial maps.

\begin{lemma}
\label{thm:equiv}
Let $A$ be a finite set and $p{\mathcal CS}_A$ (resp. ${\mathcal K}_A$) be 
the full subcategory of pure chromatic simplicial complexes with colors in $A$ (resp. the full 
subcategory of proper Kripke frames with agent set $A$).
$p{\mathcal CS}_A$ and ${\mathcal K}_A$ are equivalent categories.
\end{lemma}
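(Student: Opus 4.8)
The plan is to establish the equivalence of categories $p\mathcal{CS}_A \simeq \mathcal{K}_A$ by exhibiting explicit functors in both directions and natural isomorphisms between their composites and the respective identity functors. First I would define a functor $\Phi : \mathcal{K}_A \to p\mathcal{CS}_A$. Given a proper Kripke frame $M = \langle S, \sim^A\rangle$, let the vertex set of $\Phi(M)$ be the set of equivalence classes $[s]_i = \{\, (i, \{s' : s' \sim_i s\}) \,\}$ — that is, a vertex is a pair consisting of an agent color $i \in A$ together with a $\sim_i$-equivalence class of states. Color the vertex $(i, C)$ with $i$. A set of vertices forms a simplex iff there is a common state $s \in S$ lying in all the designated classes; since $A$ is finite with $|A| = n+1$ and every state determines exactly one $\sim_i$-class for each $i$, each state $s$ yields a facet $\{(i, [s]_i) : i \in A\}$ of dimension $n$, and properness of $M$ guarantees distinct states give distinct facets. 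One checks this family is downward closed, pure, and chromatic. On morphisms, a Kripke frame morphism $f : M \to N$ sends $s \sim_i s'$ to $f(s) \sim_i f(s')$, hence descends to a well-defined map on $\sim_i$-classes, giving a chromatic simplicial map $\Phi(f)$; functoriality is routine.

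Next I would define the reverse functor $\Psi : p\mathcal{CS}_A \to \mathcal{K}_A$. Given a pure chromatic $n$-complex $C$ with coloring $l$, let the state set of $\Psi(C)$ be the set of facets of $C$, and declare two facets $s, s'$ to satisfy $s \sim_i s'$ iff $s$ and $s'$ share their (unique) vertex of color $i$, i.e. the vertex $v \in s$ with $l(v) = i$ equals the vertex $v' \in s'$ with $l(v') = i$. This is manifestly an equivalence relation for each $i$, so $\Psi(C)$ is a Kripke frame; it is proper because two facets sharing all $n+1$ colored vertices are equal (a facet is determined by its vertex set). A chromatic simplicial map $g : C \to D$ sends facets to facets (preserving dimension, since it preserves colors and a facet has one vertex of each color) and preserves shared colored vertices, hence is a morphism of Kripke frames; again functoriality is immediate.

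Then I would construct the natural isomorphisms. For $M \in \mathcal{K}_A$, the composite $\Psi\Phi(M)$ has as states the facets of $\Phi(M)$, which by construction are exactly the sets $\{(i,[s]_i) : i \in A\}$, one per state $s$ of $M$ (using properness of $M$ for injectivity); the map $\eta_M : s \mapsto \{(i,[s]_i)\}$ is a bijection, and it is an isomorphism of Kripke frames since $s \sim_i s'$ in $M$ iff $[s]_i = [s']_i$ iff the facets share their color-$i$ vertex iff the images are $\sim_i$-related in $\Psi\Phi(M)$. Naturality in $f$ is a diagram chase on representatives. For $C \in p\mathcal{CS}_A$, the composite $\Phi\Psi(C)$ has vertices of color $i$ given by $\sim_i$-classes of facets of $C$; but by definition of $\sim_i$ in $\Psi(C)$, such a class is precisely the set of facets through a fixed color-$i$ vertex $v$ of $C$, and since $C$ is pure every color-$i$ vertex lies in at least one facet, so these classes biject with the color-$i$ vertices of $C$. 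This vertex bijection sends facets to facets and is a chromatic simplicial isomorphism $\varepsilon_C$; naturality is again routine.

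The main obstacle, and the point deserving the most care, is the properness hypothesis and the purity hypothesis doing exactly the right work at the right places: properness of Kripke frames is what makes $\eta_M$ injective (without it, two all-agent-indistinguishable states collapse to the same facet, and $\Psi\Phi$ is not full on the nose), and purity of complexes is what guarantees every colored vertex participates in a facet, so that no vertex of $C$ is lost when passing to $\Psi(C)$ and back — a non-pure complex, or one with a free-standing lower-dimensional simplex, would fail $\Phi\Psi(C) \cong C$. I would therefore highlight that the equivalence is genuinely an equivalence rather than an isomorphism of categories, and that restricting to the full subcategories named in the statement is essential. Once $\Phi$, $\Psi$, $\eta$, $\varepsilon$ are in place, invoking the standard criterion (a functor admitting a quasi-inverse via natural isomorphisms of both composites to the identities yields an equivalence) completes the proof; alternatively one may verify $\Phi$ is fully faithful and essentially surjective directly, which amounts to the same bijections checked above.
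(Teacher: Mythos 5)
Your proposal is correct and follows essentially the same construction the paper sketches (and defers to its companion paper for in detail): states correspond to facets, and two facets share their color-$i$ vertex exactly when the corresponding states are $\sim_i$-indistinguishable, with Kripke morphisms corresponding to chromatic simplicial maps. Your explicit treatment of the quasi-inverse functors and of where properness and purity are each used is a faithful elaboration of that same argument rather than a different route.
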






The equivalence of categories we have between  pure chromatic simplicial
complexes and Kripke frames can be extended to hold between Kripke models 
models) 
and these pure chromatic simplicial complexes, where facets are decorated with
AP-maximal literals. Simplicial maps $f$ are then extended to map these literals 
associated to facets $X$ to 
the same literals, associated to facet $f(X)$. In fact, a more common way to describe actual
states in this combinatorial topological approach is by decorating vertices of simplicial complexes
by local states of processes. 

Thus we formally define a simplicial model as a triple $(C, l, v)$ where 
$(C,l)$ is a pure chromatic simplicial set,  
and $v: S \rightarrow \wp(\I{AP})$ an assignment of subsets of the set of literals of $\I{AP}$  
for each state $s \in S$ such that for all facets $f=(s_0,\ldots,s_n) \in C$, 
$\bigcup\limits_{i=0}^n v(s_i)$ (that we denote as $v(f)$ by an abuse of notation) is $AP$-maximal.

Lemma~\ref{thm:equiv} extends to the following, in a straightforward
manner~: 

\begin{theorem}
\label{thm:equiv2}
Let $A$ be the set processes, and ${\mathcal SM}_{A,\cG}$ (resp. ${\mathcal KM}_{A,\I{AP}}$) be 
the full subcategory of pure simplicial models with colors in $A$ and states in 
$\cG$ (resp. the full 
subcategory of proper Kripke models with agent set $A$ and atomic propositions in $\I{AP}$)
and suppose we have a complete interpretation $\semb . \seme$ 
of propositions in $\I{AP}$ in 
$\cup_{i \in \{0,\ldots,n\}} L_i$.
Then ${\mathcal SM}_{A,\cG}$ and ${\mathcal KM}_{A,\cG}$ are equivalent categories.
\end{theorem}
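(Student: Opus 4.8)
The plan is to bootstrap the equivalence of Lemma~\ref{thm:equiv} by carrying the labelling data along its quasi-inverse functors. Let $G\colon{\mathcal K}_A\to p{\mathcal CS}_A$ and $H\colon p{\mathcal CS}_A\to{\mathcal K}_A$ be the functors witnessing that lemma; recall that $G$ turns a frame $\la S,\sim^A\ra$ into the complex whose $i$-coloured vertices are the $\sim_i$-classes of $S$ and whose facets are the sets $\{[s]_{\sim_0},\ldots,[s]_{\sim_n}\}$ for $s\in S$, while $H$ turns a complex back into the frame on its facets with $X\sim_i Y$ iff $X$ and $Y$ share their $i$-coloured vertex. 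On objects I would extend $G$ to $\hat G\colon{\mathcal KM}_{A,\I{AP}}\to{\mathcal SM}_{A,\cG}$ thus: a $\sim_i$-class of $S$ is exactly a local state of process $i$, and by hypothesis the complete interpretation $\semb\cdot\seme$ (whose domain is the union $\cup_{i\in\{0,\ldots,n\}}L_i$ of the \emph{process} local states) makes the truth value at $s$ of each $p\in\I{AP}$ a function of those local components of $s$; so I decorate the $i$-coloured vertex $[s]_{\sim_i}$ of $G(M)$ by the local state $s_i$, equivalently by the set of $p$-literals of $L(s)$ whose truth at $s$ is already decided by $s_i$. Conversely I extend $H$ by decorating the facet $X=(s_0,\ldots,s_n)$ of $H(C)$ with the literal set $v(X)=\bigcup_i v(s_i)$.

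Next I would verify the three conditions for an equivalence, reducing each time to Lemma~\ref{thm:equiv} on the underlying level and checking only the decorations. \emph{Well-definedness}: on the $\hat G$ side one sees that $\bigcup_i v([s]_{\sim_i})$ recovers $L(s)$, hence is consistent (inherited from $L(s)$) and $\I{AP}$-maximal --- but $\I{AP}$-maximality of the facet union is exactly the defining axiom of a simplicial model; on the $\hat H$ side the hypotheses that each $v(X)$ is consistent and $\I{AP}$-maximal yield a legitimate labelling function $L$. \emph{Functoriality}: a morphism $f$ of Kripke models underlies a morphism of Kripke frames, hence (Lemma~\ref{thm:equiv}) a chromatic simplicial map $G(f)$, and the extra clause $L'(f(s))\subseteq L(s)$ transports literal by literal, and hence vertex by vertex through the per-process decomposition, to the condition that $G(f)$ respects the $v$'s; symmetrically for $\hat H$. \emph{Natural isomorphism}: $\hat H\circ\hat G$ and $\hat G\circ\hat H$ are the identity on the underlying frames/complexes up to the natural isomorphisms already furnished by Lemma~\ref{thm:equiv}, and those isomorphisms respect the decorations precisely because decomposing $L(s)$ over the vertices and then taking the union is the identity, and conversely.

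The real content underlying all of this is the bijection between global $\I{AP}$-maximal consistent literal sets on a facet and tuples of vertex literal sets whose union is $\I{AP}$-maximal, and this is exactly where completeness of $\semb\cdot\seme$ is used. Completeness (together with the fact that its domain consists of process local states only) forces $L(s)$ to be determined by the tuple $(s_0,\ldots,s_n)$, so that the map ``$L(s)\mapsto$ vertex decorations'' is injective; surjectivity onto the tuples satisfying the maximality condition is what the simplicial-model axiom was designed to supply; and the disjointness of the $L_i$ --- each process' local state carries its own id --- makes ``own each proposition by one vertex'' unambiguous, so that the union reassembling $v(X)$ behaves like a disjoint union and the two directions are genuinely mutually inverse. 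A minor point worth stating explicitly is the direction of inclusion in the two notions of morphism: Kripke-model morphisms are defined with $L'(f(s))\subseteq L(s)$, whereas on the simplicial side facet literals are sent to ``the same literals'' of the image facet, so one notes that on the full subcategories cut out by a fixed complete interpretation these two conditions coincide --- which is what makes $\hat G$ and $\hat H$ honest, mutually inverse functors rather than merely related by an adjunction.
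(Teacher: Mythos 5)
Your proposal is correct and follows essentially the same route the paper intends: the paper offers no detailed argument beyond asserting that Lemma~\ref{thm:equiv} ``extends in a straightforward manner'' via decorating facets with $\I{AP}$-maximal literal sets (equivalently, vertices with the local-state contributions), and your construction of $\hat G$ and $\hat H$ is exactly that extension, spelled out. Your closing observations --- that completeness of $\semb\cdot\seme$ is what makes the facet/vertex decompositions mutually inverse, and that the inclusion $L(f(s))\subseteq L(s)$ collapses to equality between $\I{AP}$-maximal consistent sets --- correctly fill in the details the paper leaves implicit.
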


By this equivalence between (proper) Kripke models and simplicial models, 
the  approach to task solvability  described in the previous section
 can easily be rephrased in terms of simplicial
 maps, as in \cite{HerlihyKR:2013}.
We can interpret the task solvability Theorem~\ref{thm:Kripketasksolv2}
in purely topological terms, the interest being
that some topological invariants will prevent us from finding a map $h$ as above, showing
impossibility of the corresponding task specification. 
In case of wait-free read/write memory models, we know that
$P(I)$ corresponds to some subdivision of $I$, hence $\pi_I: P(I) \rightarrow I$ is 
a weak homotopy equivalence, this restricts a lot what task specifications
$\Delta$ can be solved, as also exemplified below.


\section{Examples and applications}
 \label{sec:apps}
 
 \subsection{Action model for  $\mathsf{IIS}$}
 \label{sec:theIIScase}
Consider the iterated immediate snapshot model $\mathsf{IIS}$ of Section~\ref{subsec:model}, 
obtained by composing the one-round $\mathsf{IS}$ model $N$ times.
   Processes communicate through a sequence of arrays, $\mathsf{mem}_1$, $\mathsf{mem}_2\ldots,\mathsf{mem}_N$,
 executing an immediate snapshot $\mathsf{IS}()$ operation on each $\mathsf{mem}_r$.

This model is represented by schedules of the environment as follows.
  A \emph{block action} $sc_i$ is an ordered partition $[s_0,\ldots,s_k]$ of the set of ids $A=\{ 0,\ldots, n\}$, consisting
of \emph{concurrency classes}, $s_i$. The concurrency classes $s_i$ are non-empty, disjoints subsets of $A$, whose union is $A$, representing that all processes in $s_i$ are concurrently scheduled to write, and then they are all concurrently scheduled to read.
Thus, $0\leq k\leq n$. When $k=n$ processes will be scheduled  sequential (processes take immediate snapshots one after the other), and when $k=0$  fully concurrent (they all execute an immediate snapshot concurrently).
Let us denote by $L^{sc}_e$ the set of all possible block actions.

The initial states~$\Gz$ for one round, $N=1$, are as follows. The initial states of all processes are identical, except that the initial state of process $i$
contains its id $i$. The initial states of the environment are in a 1 to 1 correspondence to all possible block actions, $L^{sc}_e$; for each block action  $[s_0,\ldots,s_k]$  there
is an initial state of the environment. In addition, the environment's state encodes that the shared memory is initially empty.
Thus, each state in  $\Gz$ can be denoted as $([s_0,\ldots,s_k], q_0,q_1,\ldots,q_n)$,
where $[s_0,\ldots,s_k]$ is a state of the environment and $q_i$ is the initial state of process $i$. 

The {$1$-step action model} for   protocol $D$ is $A = \la R,\sim,\pre \ra$, 
has an  action point in $R$ for each block action  $[s_0,\ldots,s_k]$. For the accessibility relation $\sim$,
define $view_i(act)$,  $act=[s_0,\ldots,s_k]$, to be the set of ids that are scheduled before or together with process $i$,
namely;  $view_i(act)=s_0\cup\cdots\cup s_j$, where $i\in  s_j$.
For two block actions  $act=[s_0,\ldots,s_k]$, $act'=[s'_0,\ldots,s'_{k'}]$, it holds that $act\sim_i act'$
iff 
$view_i(act)=view_i(act')$.
Indeed, in \emph{every} protocol $D$, process  $i$ cannot tell if the schedule applied is $act$ or $act'$, in both it reads
all values written by processes in $view_i(act)=view_i(act')$. 
Furthermore, in  \emph{every} protocol $D$  (even not full-information)
if $view_i(act)\neq view_i(act')$, then it is not the case that $act\sim_i act'$, assuming a process always writes
something to the shared memory. Thus we have the following.

\begin{lemma}
For every protocol $D$, the $1$-step action model  is $A = \la R,\sim,\pre \ra$, 
where $R$ consists of all block actions  $[s_0,\ldots,s_k]$ and $\sim$ on two block actions $act,act'$
is defined by
$act\sim_i act'$
iff 
$view_i(act)=view_i(act')$.
\end{lemma}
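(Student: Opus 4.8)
The plan is to read the relation $\sim_i$ directly off the one-round semantics of $\mathsf{IS}$ together with the definition of the action model of a protocol, where $act\sim_i act'$ means precisely that process $i$ ends the round in the same local state under the two block actions. So the argument reduces to identifying what $i$'s post-round local state can depend on. First I would note that in a single round each process $j$ writes before it reads, and the value $w_j$ it writes is, by determinism of $D$ and since all processes start in the same initial state except for the id they carry, a fixed function of $j$ alone. Next, by the definition of an immediate snapshot, the snapshot taken by $i$ under a block action $act=[s_0,\ldots,s_k]$ returns exactly the registers of the processes scheduled before or with $i$, that is of the processes in $view_i(act)=s_0\cup\cdots\cup s_j$ where $i\in s_j$, each holding its (id-determined) value, together with an ``empty'' marker for every other process. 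Hence the snapshot delivered to $i$ is a function of $view_i(act)$ alone.

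The ``if'' direction is then immediate and works for \emph{every} protocol $D$: if $view_i(act)=view_i(act')$ then $i$ reads the same snapshot, and applying the deterministic transition of $D$ to its (schedule-independent) pre-round state and this snapshot it ends the round in the same local state, so $act\sim_i act'$. For the converse I would argue on the snapshots. If $view_i(act)\neq view_i(act')$, pick some $\ell\in view_i(act)\setminus view_i(act')$; then under $act$ the register of $\ell$ already carries $w_\ell$ when $i$ snapshots, whereas under $act'$ process $\ell$ is scheduled strictly after $i$ and so its register is still empty at that point, so the two snapshots differ. Because each process writes something in the round (our standing assumption), a process's local state at the end of the round records at least the set of processes it observed having written; hence $i$'s two post-round local states differ and $act\not\sim_i act'$. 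For the full-information protocol this is transparent, since the snapshot is retained verbatim; for a general protocol it is exactly where the ``a process always writes something'' hypothesis does its work. Finally, $R$ consists of all block actions and $\pre$ is the identity precondition by construction, and $\sim_i$ (being view-equality) is an equivalence relation, so $\la R,\sim,\pre \ra$ with this $\sim$ is precisely the $1$-step action model.

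The step I expect to be the main obstacle is the converse inclusion for an \emph{arbitrary} protocol rather than the full-information one: a priori a protocol may discard part of its snapshot, so ruling out that $\sim_i$ is strictly coarser than view-equality genuinely uses the hypothesis that writes are non-trivial and that the post-round local state still records the observed set of processes. Everything else is bookkeeping about ordered partitions of $\{0,\ldots,n\}$ and the definition of the immediate-snapshot operation.
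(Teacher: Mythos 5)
Your argument is essentially the paper's own: the lemma is justified there in the two sentences immediately preceding its statement, which argue exactly as you do that equal views yield identical snapshots (hence identical post-round local states under any deterministic $D$), and that unequal views are detectable because every process writes something. The gap you flag in the converse direction --- that an arbitrary protocol could in principle discard from its local state the set of processes it observed --- is genuine, but the paper glosses over it too, invoking only the ``a process always writes something'' hypothesis, so your treatment is the same argument made slightly more honest rather than a different route.
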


Furthermore, the $N$-step action model for the iterated immediate snapshot model $\mathsf{IIS}$
is easily obtained because action models compose~\cite{DEL:2007}, and 
because the iterated model uses a fresh memory in each round\footnote{
We confuse the notation of the actual number of steps a process executes, and the number of rounds
in the IIS model, to avoid further notation}.
(For the non-iterated version of the model, the accessibility relation $\sim$ is more complicated, see~\cite{AttiyaR2002}.)
Notice however, that the $\mathsf{IIS}$ model is represented by the action composition only for the
full-information protocol, because in another protocol $D$, even if a process $i$ was able to distinguish
between two schedules in the first round,  $D$  might not announce it to the shared memory in the second iteration.

\begin{theorem}
The $N$-step action model  for the full-information protocol $D$ is the composition of the $1$-step action model, $N$-times.
\end{theorem}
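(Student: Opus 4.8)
The plan is to prove this by induction on $N$, using the general fact cited just before the statement — that action models compose via a product construction on action points — together with the specific structure of the iterated model, namely that round $r+1$ operates on a fresh array $\mathsf{mem}_{r+1}$ that is disjoint from all earlier arrays. Write $A^{(N)}$ for the $N$-step action model of the full-information protocol and $A^{(1)}$ for the $1$-step action model described in the preceding lemma, whose action points are block actions and whose relation is $act \sim_i act'$ iff $view_i(act) = view_i(act')$. I want to show $A^{(N)} \cong A^{(1)} \otimes \cdots \otimes A^{(1)}$ ($N$ factors), where $\otimes$ is action-model composition. The base case $N=1$ is immediate. For the inductive step, the composed action model $A^{(N)} \otimes A^{(1)}$ has as action points the pairs (sequences) $(\sigma, act)$ where $\sigma$ is an $(N-1)$-round sequence of block actions and $act$ is a block action, which is exactly the data of an $N$-round schedule of $\mathsf{IIS}$; and its indistinguishability relation, by the definition of composition, is $(\sigma,act) \sim_i (\sigma',act')$ iff $\sigma \sim_i \sigma'$ in $A^{(N-1)}$ and $act \sim_i act'$ in $A^{(1)}$. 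So the content to verify is that this conjunction is exactly the relation "$i$ ends in the same state after the $N$-round full-information run of $\sigma\cdot act$ as after $\sigma'\cdot act'$."

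The key step is therefore an \emph{independence} argument: because $\mathsf{mem}_N$ is fresh, the local state process $i$ holds at the end of round $N$ in the full-information protocol splits into two independent pieces — the state carried into round $N$ from rounds $1,\dots,N-1$, and what $i$ reads in $\mathsf{mem}_N$. In the full-information protocol the first piece is precisely $i$'s round-$(N-1)$ view, so two $(N-1)$-round schedules give $i$ the same carried-in state iff $\sigma \sim_i \sigma'$ in $A^{(N-1)}$; and since in round $N$ process $i$ writes that carried-in state and then reads exactly the entries written by processes in $view_i(act)$, two single-round block actions $act, act'$ produce the same round-$N$ contribution (given equal carried-in states, which are already guaranteed) iff $view_i(act) = view_i(act')$, i.e. $act \sim_i act'$ in $A^{(1)}$. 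Conjoining, $i$'s final state after $\sigma\cdot act$ equals that after $\sigma'\cdot act'$ iff both coordinate relations hold — which is exactly the composed relation. This establishes $A^{(N)} \otimes A^{(1)} \cong A^{(N+1)}$, and the induction closes. Preconditions are identities throughout, so they compose trivially and need no separate checking.

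The main obstacle I anticipate is making the independence claim fully rigorous, and in particular being careful about the direction "equal final state $\Rightarrow$ equal carried-in state." This requires the standing assumption (invoked twice in the surrounding text) that a process always writes something to shared memory — so that the carried-in state, having been written into $\mathsf{mem}_N[i]$ and then read back by $i$ itself, is recoverable from $i$'s final state — and that in the full-information protocol the writes are injective enough that $i$ can disentangle its own contribution from the others'. A secondary point to handle cleanly is the remark already flagged in the excerpt: this composition identity holds \emph{only} for the full-information protocol, because for a general $D$ information distinguishable in round $r$ may simply not be re-announced in round $r+1$; so the induction must be run with $A^{(k)}$ meaning the full-information $k$-step action model at every stage, and the proof should note explicitly where full-information is used (namely, in identifying "carried-in local state" with "round-$(N-1)$ view"). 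Everything else — that the action points of the composition biject with $N$-round $\mathsf{IIS}$ schedules, and that composition of action models is associative so "$N$-times" is unambiguous — is routine given the cited results.
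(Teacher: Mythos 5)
There is a genuine gap, and it sits precisely in the step you dismiss with ``(given equal carried-in states, which are already guaranteed)''. In the inductive step you must show the forward implication: $\sigma \sim_i \sigma'$ in $A^{(N-1)}$ together with $view_i(act)=view_i(act')$ implies that $i$ ends in the same state after $\sigma\odot act$ as after $\sigma'\odot act'$. But in the full-information protocol, what $i$ acquires in round $N$ is the collection of carried-in states of \emph{all} processes $j\in view_i(act)$, whereas the hypothesis $\sigma\sim_i\sigma'$ only controls $i$'s \emph{own} carried-in state; it says nothing about the round-$(N-1)$ views of the other processes $i$ reads. Concretely, for three processes take $\sigma=[\{p\},\{q,r\}]$, $\sigma'=[\{p\},\{q\},\{r\}]$ and $act=act'=[\{p,q,r\}]$. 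Then $\sigma\sim_r\sigma'$ (in both, $view_r=\{p,q,r\}$ and $r$ reads the same initial values) and trivially $act\sim_r act'$, so the composed action model declares $(\sigma,act)\sim_r(\sigma',act')$. Yet in round $2$ process $r$ reads $q$'s round-$1$ view from $\mathsf{mem}_2[q]$, which is $\{p,q,r\}$ under $\sigma$ but $\{p,q\}$ under $\sigma'$; so $r$ ends in different local states, and the $2$-step action model as defined in the paper ($u\sim_i v$ iff $i$ ends in the same state) does \emph{not} relate these two points. The product relation is strictly coarser than the true $N$-step relation for $N\ge 2$. This is even visible in the paper's own two-round figure: the central triangles of the subdivisions of two distinct first-round triangles are disjoint in the iterated chromatic subdivision, whereas the product relation would glue them along a $\{p,r\}$-coloured edge in the example above.

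So the biconditional you are trying to establish in the inductive step is false in one direction, and the direction you flag as the anticipated obstacle (``equal final state $\Rightarrow$ equal carried-in state'') is in fact the unproblematic one, by exactly the recoverability argument you sketch. The paper gives no proof beyond the remark that action models compose and that each round uses fresh memory, so you are not missing a hidden lemma; rather, a correct induction cannot use the \emph{same} $1$-step action model at every round. The invariant that must be carried is not ``$i$'s own round-$(N-1)$ view'' but ``the vector of round-$(N-1)$ views of all processes in $view_i(act)$'', which forces the action points of the round-$N$ action model to record the values written in round $N$ (equivalently, to be parameterized by the current protocol complex) rather than only the ordered partition; with the action model as literally defined, the stated composition identity fails, and your proof cannot be repaired without changing what is being composed.
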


The set of all $N$-step runs~$\cR$ of protocol~$D$ in the  $\mathsf{IIS}$ model
that start in initial states~$\Gz$ can be obtained by applying schedules of the following form. 
Every execution in~$\cR$  is  of the form
$x\odot sc_1\odot sc_2\odot\cdots\odot sc_R$ where $x\in\Gz$ and
 $sc_i$ is a {block scheduling action}
for every integer $1\leq i\leq R$. 
To apply the block action $sc_1=[s_0,\ldots,s_k]$ to initial state $([s_0,\ldots,s_k], q_0,q_1,\ldots,q_n)$, and obtain state $x\odot [s_0,\ldots,s_k]$, the 
environment schedules the processes in the following order, to execute their read and write operations on $\mathsf{mem}_{1}$.
 It first schedules the processes in $s_0$ to execute their
write operations, and then it schedules them to execute their read operations (the specific order among writes
is immaterial, and so is the case for the reads). Then the environment repeats the same for the processes in $s_1$,
scheduling first the writes and then the reads, and so on, for each subsequent concurrency class $s_i$.
See Figure~\ref{fig-3scheduls}.
Notice that the read and write operations of block action $sc_i$ are applied to memory $\mathsf{mem}_{i}$.
See Figure~\ref{fig-allScheds}.

Consider the composition of the block actions $sc_1\odot sc_2\odot\cdots\odot sc_R$.
Let $IIS_R$ denote the set of all such composition of block actions. That is, 
$$
sc_1\odot sc_2\odot\cdots\odot sc_R \in IIS_R
$$ 
if and only if each $sc_i$ is an ordered partition of $A$, $[s_0,\ldots,s_k]$.
Then, any execution of~$\cR$ of protocol~$D$ in the  $\mathsf{IIS}$ model can be obtained by applying
a composed block action of $IIS_R$ to an initial state in $\Gz$.
In other words, $\cR$ can be seen as the product of $\Gz$ and $IIS_R$.

\begin{figure}
\begin{center}
\includegraphics[scale=0.3]{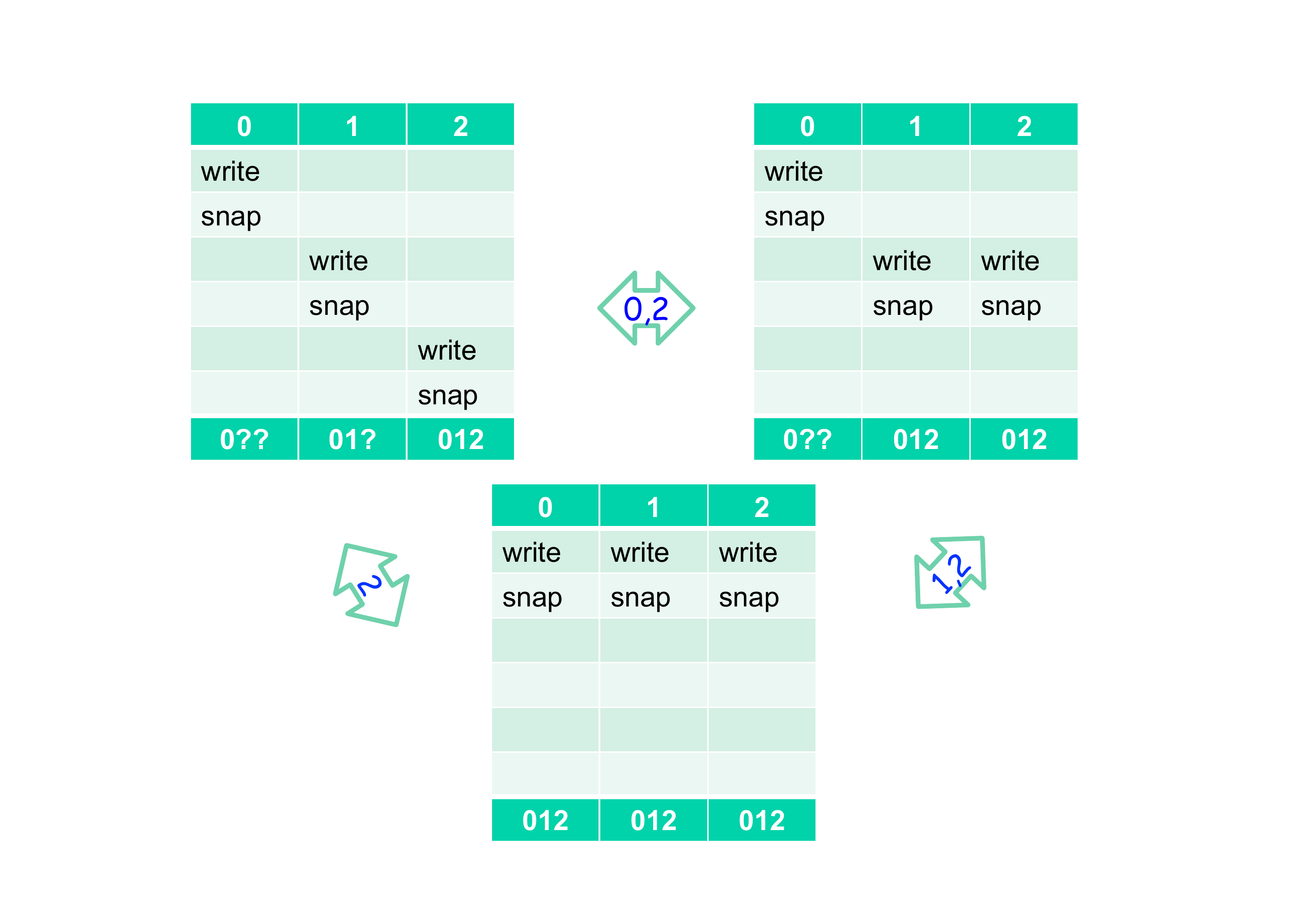}
\end{center}
\caption{Schedules $[\{0\},\{1\},\{2\}]$, $[\{0\},\{1,2\}]$, $[\{0,1,2\}]$, the arrows are labeled with processes
that do not distinguish between the schedules.
}
\label{fig-3scheduls}
\end{figure}

\begin{figure}
\begin{center}
\includegraphics[scale=0.3]{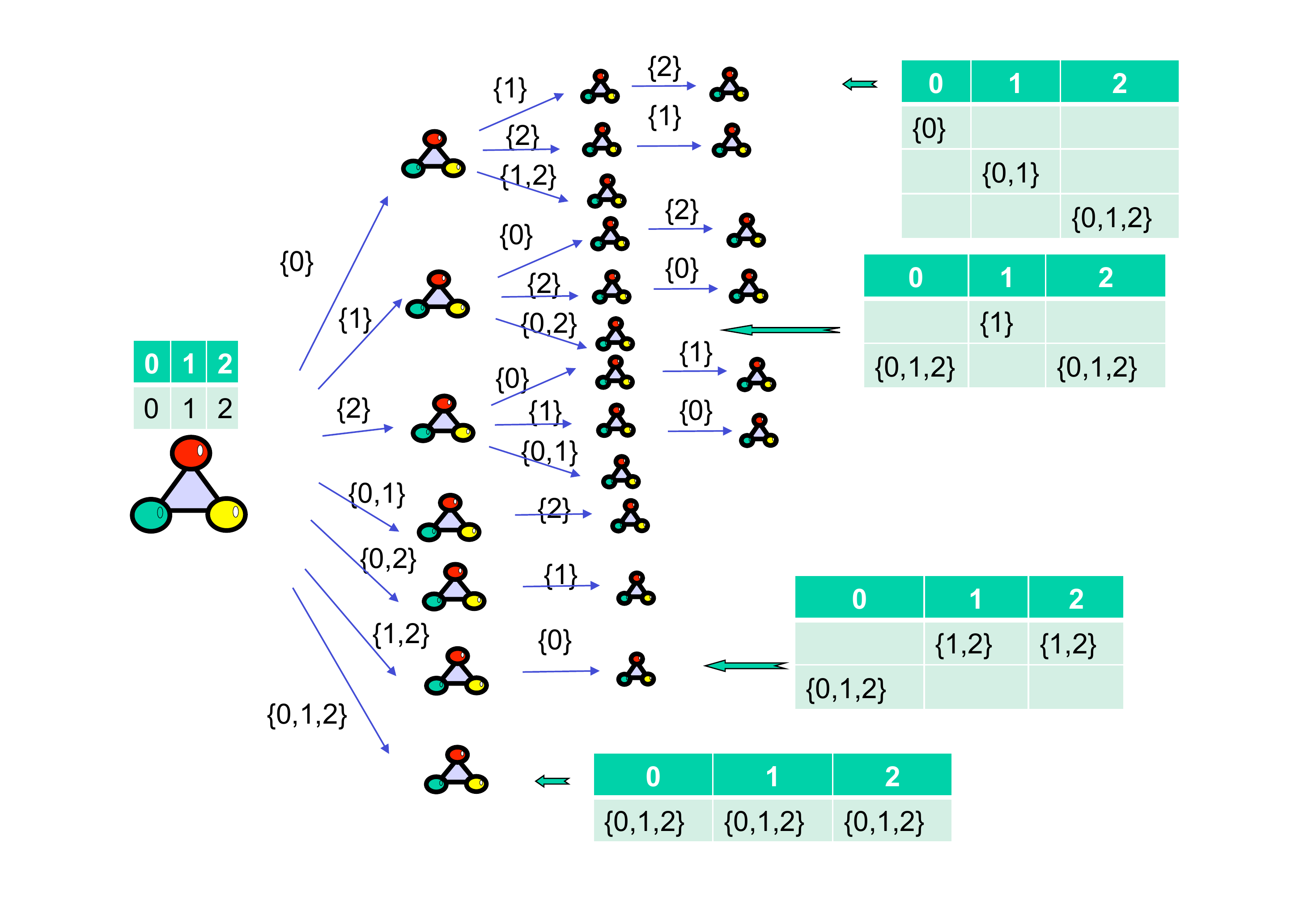}
\end{center}
\caption{All block schedules for 3 processes. The bottom row of each table contains the views of the processes at the end
of the schedule. 
}
\label{fig-allScheds}
\end{figure}

In Figure~\ref{fig-subSched-dim2-1}  the protocol complex after one round  and after two
rounds are illustrated, with several examples of schedules. 
Each vertex of a simplex has a color
that represents one of the agents.
The input complex is not
depicted, it consists of a single $2$-dimensional simplex and all its faces (a triangle).
Thus, it corresponds to a single initial state $x$.
The complex on the lower left corner is a chromatic subdivision of the input triangle. It
represents the protocol complex after one round, where each one of three processes 
(in the figure called $p,q,r$) are scheduled to execute first one write operation to $mem_1$,
and then one read of all the registers in $mem_1$. 
 
 The green simplex in the lower left complex
is obtained by applying the schedule $\{p\}\{qr\}$ to the initial state $x$.
First $p$ writes to its component of $mem_1$ then it reads all three components.
Then $q$ and $r$ are scheduled to concurrently write to their components of $mem_1$,
and finally $q$ and $r$ are scheduled to concurrently read $mem_1$.
Similarly, the schedule $\{p\}\{q\}\{r\}$ schedules first $p$ (its write followed by its reads),
then it schedules $q$, and finally it schedules $r$.
Notice that  $\{p\}\{qr\}$ and $\{p\}\{q\}\{r\}$ are points of the action model,
and they are indistinguishable to both $p$ and to $r$, which is why the green and the yellow
triangle share an edge labeled with the colors of $p$ and $r$. Furthermore,
these two schedules are indistinguishable whenever applied to two initial states
indistinguishable to both $p$ and $r$.
The schedule where all three write concurrently and then all three do their reads
concurrently consists of a single concurrency class $\{pqr\}$, and yields the triangle
at the center.

Now, consider the complex at the top right of the figure. 
The triangle at the center of the green part of the complex
is obtained by applying to $x$ the schedule $\{p\}\{qr\}\odot \{pqr\}$,
where the processes first access $mem_1$ following  $\{p\}\{qr\}$
and then $mem_2$ following $ \{pqr\}$.
Consider the other green triangle in the second round complex,
obtained by the schedule
$\{p\}\{qr\}\odot \{pr\}\{q\}$. Notice that now the
 action points are indistinguishable to $q$ only, that is,
 $\{p\}\{qr\}\odot \{pqr\}\sim_q \{p\}\{qr\}\odot \{pr\}\{q\}$, and indeed the two green
 triangles share only a white vertex.

An interesting example is the yellow triangle at the corner of the 2nd round complex,
obtained by the schedule
$\{p\}\{qr\}\{q\}\odot \{p\}\{r\}\{q\}$, namely, processes are scheduled
sequentially to access first $mem_1$ and then again sequentially $mem_2$.

Consider the green and yellow triangles in the two round protocol, obtained
through the 2nd round schedule  $\{pr\}\{q\}$. The green one
is obtained by $\{p\}\{qr\}\odot \{pr\}\{q\}$ while the yellow one by
$\{p\}\{q\}\{r\}\odot \{pr\}\{q\}$. It is illustrating to consider the one round chromatic
subdivision complex as an input complex, and then apply the 1-round action model
to it, to obtain the complex at the top right.

We would like to stress that the same schedule may have different semantics
in different models.
In this iterated model, 
$\{p\}\{r\}\{q\}\odot \{p\}\{r\}\{q\}$ is equivalent to the schedule
$\{p\}\{p\}\{r\} \{r\}\{q\}\{q\}$. In this last schedule, $p$ writes and read $mem_1$
and then $mem_2$, and only then $r$ writes and reads $mem_1$ and then $mem_2$,
and finally $q$ does the same. But the state obtained is exactly the same as
in the schedule $\{p\}\{r\}\{q\}\odot \{p\}\{r\}\{q\}$ (the three processes end up in the same
local states in both schedules).
In contrast, in the non-iterated version of the model, where there is only
one shared array $mem$, the two schedules produce different states.
Actually, in the non-iterated IS model, the action points
$\{p\}\{r\}\{q\}\odot \{p\}\{r\}\{q\}$ and $\{p\}\{p\}\{r\} \{r\}\{q\}\{q\}$
are distinguishable to the three processes!

Recall that in the non-iterated IS model executions are organized in concurrency classes,
where in each on, a set of processes is scheduled to first write (to their corresponding registers) in $mem$
and then read $mem$ (all the registers). The action model for $N$  has one point for each such schedule,
where each process executes the same number of operations, $N$. The indistinguishability
relation $\sim_i$ of when process $i$ does not distinguish between the two schedules
is characterized in~\cite{AttiyaR2002}.

\begin{figure}
\begin{center}
\includegraphics[scale=0.25]{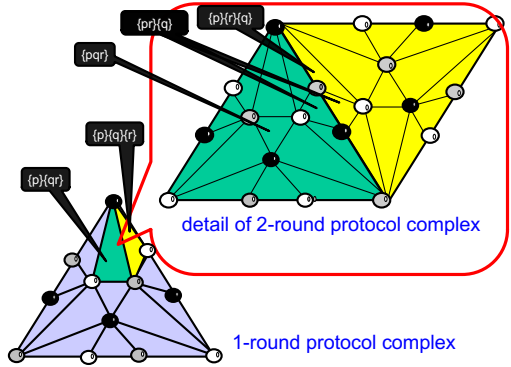}
\end{center}
\caption{Schedules for the iterated model (from~\cite{HerlihyKR:2013}).
A black vertex is associated to $p$, a white one to $q$ and a grey one to $r$.
}
\label{fig-subSched-dim2-1}
\end{figure}

\subsection{Solving a task in  $\mathsf{IIS}$}
 \label{sec:aTask}
 
 Examples  of inputless problems have been mentioned in the introduction. 
 Here we bring together the pieces of the framework developed in the paper,
 by studying the solvability of a simple inputless task. To simplify the presentation,
 we develop only the case of 3 processes, because all the essential elements appear already
 here.
 
Consider here  the $2$-test\&set inputless task for three processes, 
specifying that in any execution at most two process should output $1$ and the others should output $0$,
and such that if one (or two) process terminates its execution without seeing any other processes, it should 
 output $1$ (or if two finish without seeing the third, they both should output $1$).
\begin{figure}
\begin{center}
\includegraphics[scale=0.33]{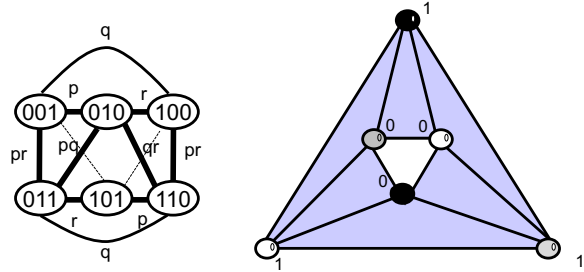}
\end{center}
\caption{Output Kripke frame and corresponding output complex for the 2-test\&set task.
A black vertex is associated to $p$, a white one to $q$ and a grey one to $r$.
}
\label{fig-testSetTwo}
\end{figure}
The {inputless task} in terms of  Kripke frames is $\cT=\la \Gz,\Gout,\Delta \ra$, where $\Gz$
defines $N$-step schedules. The
{output Kripke frame,} $\Gout$, has a corresponding $2$-dimensional chromatic complex, as
is illustrated in Figure~\ref{fig-testSetTwo} (processes are denoted $p,q,r$ to distinguish them from output values $0,1$).
The input/output relation $\Delta$ is specified,  as follows. Any input state $s$ in $\Gz$ with a schedule
  $act$, where $i$ never reads a value by another process,  $\Delta$ requires $i$ to decide $1$. 
  Similarly,  any state $s$ in $\Gz$ with a schedule
  $act$, where $i$ and $j$ never read a value by the other process,  $\Delta$ requires $i$ and $j$ to decide $1$,
  and the other process to decide $0$.
  In all other cases, $\Delta$ allows any outputs where are most two processes decide $1$. 
 
 \begin{theorem}
 \label{th:testSetTwo}
 The 2-test\&set task is not  solvable in the IIS model.
 \end{theorem}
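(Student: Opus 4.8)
The plan is to translate the statement into topology and then exhibit a $\pi_1$-obstruction. By Corollary~\ref{lem:fullInfoPower} it suffices to rule out solvability by the full-information protocol for every number of rounds $N\ge 1$, and by the topological reading of Theorem~\ref{thm:Kripketasksolv2} (Section~\ref{sec:categories}) together with the equivalence of categories of Theorem~\ref{thm:equiv2}, this amounts to showing that for no $N$ is there a chromatic simplicial map $h\colon P_N\to O$ that respects $\Delta$, where $P_N$ is the $N$-round $\mathsf{IIS}$ protocol complex and $O$ is the output complex of Figure~\ref{fig-testSetTwo}. Here the inputs form a single triangle on $\{p,q,r\}$, so (as recalled after Theorem~\ref{thm:equiv2}, and because $\mathsf{IIS}$ iterates the chromatic subdivision) $P_N$ is the $N$-th chromatic subdivision $\mathrm{Chr}^N(\Delta^2)$; in particular $P_N$ is a topological $2$-disk whose boundary circle $\partial P_N$ is the subdivision of $\partial\Delta^2$, consisting of three arcs $A_{pq},A_{qr},A_{rp}$ glued at the three corners, with $A_{ij}=\mathrm{Chr}^N(\{i,j\})$ the protocol complex of the $\mathsf{IIS}$ executions in which the third process $k$ is never seen by $i$ or by $j$.

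First I would pin down the topology of $O$. The key point is that, as drawn in Figure~\ref{fig-testSetTwo}, $O$ is the boundary of the octahedron on the vertices $(i,b)$, $i\in\{p,q,r\}$, $b\in\{0,1\}$, with the two antipodal facets $\{(p,1),(q,1),(r,1)\}$ and $\{(p,0),(q,0),(r,0)\}$ removed; hence $O$ is homotopy equivalent to a circle, and $\pi_1(O)\cong\mathbb{Z}$ is generated by the loop $\gamma_1=\partial\{(p,1),(q,1),(r,1)\}$, which is a nontrivial $1$-cycle precisely because that triangle is not a simplex of $O$.

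Next I would read off the boundary condition forced by $\Delta$ on any candidate $h$. Because $h$ is chromatic it amounts to a $2$-colouring $c$ of the vertices of $P_N$ with $h(v)=(\mathrm{color}(v),c(v))$, and simpliciality of $h$ is exactly the requirement that no triangle of $P_N$ be monochromatic under $c$ (the two monochromatic colourings being the forbidden facets). The corner of $P_N$ coloured $i$ is the final state of $i$ in the executions where $i$ sees only itself, so the lone-process clause of $\Delta$ forces $c=1$ there; and each edge of $A_{ij}$ is the pair of final states of $i,j$ in an execution where neither of them sees $k$, so the clause of $\Delta$ for two processes not seeing the third forces $c\equiv 1$ on all of $A_{ij}$. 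Hence $h$ maps $\partial P_N$ into the subcomplex of $O$ on the vertices $(p,1),(q,1),(r,1)$, i.e., into $\gamma_1$, carrying each arc $A_{ij}$ once across the edge $\{(i,1),(j,1)\}$; so $h|_{\partial P_N}\colon\partial P_N\to\gamma_1$ has degree $\pm1$, in particular nonzero, and $[h|_{\partial P_N}]$ generates $\pi_1(O)$.

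Finally I would close the argument: $h$ is continuous and defined on the whole disk $P_N$, so $h|_{\partial P_N}$ is null-homotopic in $O$, contradicting the previous paragraph. Hence no such $h$ exists, for any $N$, so the $2$-test\&set task is not solvable in $\mathsf{IIS}$. The step I expect to be the main obstacle is the middle one: verifying that $A_{ij}$ really is $\mathrm{Chr}^N(\{i,j\})$ and that every vertex of $A_{ij}$ lies on an edge whose underlying $\mathsf{IIS}$ execution has both $i$ and $j$ ignorant of $k$, so that the ``two finish without seeing the third'' clause genuinely fixes $c$ on the entire boundary circle; once that is in place, the fact that $P_N$ is an iterated chromatic subdivision (hence a disk) and the covering-space/degree computation are routine. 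One could replace the $\pi_1$ argument by an explicit winding/parity count of $c$ along $\partial P_N$, but the homotopy-theoretic phrasing is the most economical.
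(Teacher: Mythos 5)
Your proposal is correct and follows essentially the same route as the paper's proof: reduce to the full-information protocol, pass to the simplicial picture where the protocol complex is an iterated chromatic subdivision of a triangle (hence a disk), observe that $\Delta$ forces the boundary to map onto the non-contractible all-$1$ cycle of the output complex (the octahedron with the two monochromatic facets removed), and derive a contradiction from the contractibility of the boundary within the disk. Your write-up merely makes explicit the homotopy-type computation of $O$ and the degree argument that the paper leaves implicit.
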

 
 \begin{proof}
 In the proof we will be moving freely (and abusing notation) between the category
 of Kripke models and that of complexes, by Theorem~\ref{thm:equiv2}.
 
Let   $I= \la \Gz,\sim^A,L^{\I{AP}} \ra$ be the $N$-step {input model} for the IIS model.
Thus, each state $s\in\Gz$ corresponds to an $N$-step schedule of the IIS distributed computing model.

 Let    $T=\mbox{$\la \Gout,\sim,\pre \ra$}$, be the action model for the 2-test\&set task,
where the {action points} are identified  with binary decisions $ap=\la d_0,d_1,d_2 \ra$, not all equal. 
An action point $act$ has precondition that is true in every  input state $u$ such
that $ap\in \Delta(u)$, where $\Delta$ is the input/output relation of the 2-test\&set task.
The action model has as Kripke frame precisely $O=\la \Gout,\sim^A \ra$ illustrated
in Figure~\ref{fig-testSetTwo}.

 Consider a full-information protocol $D$ for $N$-steps. 
 First notice that we can assume without loss of generality that $D$ is full information 
 (e.g. see Lemma~\ref{lem:fullInfo}).
 Now, let   $A = \la R,\sim,\pre \ra$ be the {$N$-step action model} for protocol $D$ is
  where each action point in $R$ corresponds
to an $N$-step schedule of the distributed computing model.
The  {protocol Kripke model} $I[A]= \la \Gz\times R, \sim^A,L^{\I{AP}} \ra$
is obtained by the product update of the action model $A$ and the input model $I$.
The corresponding $2$-dimensional chromatic complex for $N=2$ is illustrated in 
Figure~\ref{fig-subSched-dim2-1}, which is indeed, in general, 
an iterated chromatic subdivision (e.g. see~\cite{HerlihyKR:2013}).

If the task is solvable by the protocol $D$, by Theorem~\ref{thm:Kripketasksolv2}, there 
exists morphism $h$ from the protocol Kripke 
model $I[A]$ to the output model $I[T]$, that respects $\Delta$.

Let $\pi$ be the projection morphism from (the underlying Kripke frame of) $I[T]$ 
(which is $I\times T$) to the output complex $\Gout$.
Then  $\pi\circ h$ is a chromatic simplicial  map from $I[A]$ to $\Gout$.
However, this map cannot cannot exist, because $I[A]$ is a subdivision of a simplex,
and $\pi\circ h$ sends the boundary of $I[A]$ to the boundary of $\Gout$.
This is a contradiction, because the boundary of $I[A]$ is contractible to a point within $I[A]$, 
while the  boundary of $\Gout$ is not.
 \end{proof}

\section{Conclusions}
We have developed a framework to give a formal semantics in terms of 
dynamic epistemic logic (DEL) to distributed computing models where processes communicate
by reading and writing shared registers. We showed how the model of computation itself
can be represented by an action model, $A$, consisting of the possible schedules of the model,
and a relation defining when a process does not distinguish between two schedules.
The DEL product update $I[A]$ of the action model with the initial model $I$ represents knowledge gained by
the processes in the model. 
We showed how to model an inputless task also by an action model, $T$. The product update $I[T]$
represents knowledge that processes should be able to gain to solve the task, formally expressed by the
existence of a certain 
 morphism from $I[A]$ to  $I[T]$.
 Finally, by moving freely between  the category of simplicial complex models, and the equivalent  category
 of Kripke models, we bring benefits back and forth, between DEL and the topological approach to distributed
 computing. 
 
 The framework refines the work of our companion paper~\cite{ericSergioDEL1-2017},
 that included the formalisation of knowledge change only with respect to the inputs of the processes,
 and serve well for input/output tasks.
 Both settings can be incorporated into one setting, but when studying only input/output it would
 overly complicated. Many open questions remain, these papers are only the beginning of a
 longer term project to study fault-tolerant distributed computing from a dynamic epistemic logic perspective.
 For example, we work only with simplicial complexes where all facets have dimension $n$, and
 this is indeed sufficient to study any task in models where failures are not detectable in finite time.
 The framework can be extended to arbitrary complexes, to deal well with synchronous
 systems where processes can crash, this will be the subject of a forthcoming article.

 \subparagraph*{Acknowledgements.}
  S. Rajsbaum would like to acknowledge UNAM PAPIIT~~under Grant
  No.IN109917
    and~the Ecole Polytechnique for financial support through the 2016-2017 Visiting Scholar Program.

\newpage

\appendix




\section{Dynamic epistemic logic background}
\label{app:dynEpLog}
We adhere to the notation of~\cite{DEL:2007}.
Let $\I{AP}$ be a countable set of \emph{atomic propositions} (i.e.,
\emph{propositional variables}).
The set of {\em literals} over $\I{AP}$ is
$\Lit(\I{AP})= \I{AP} \cup \{\neg p \mid p \in \I{AP}\}$.
The {\em complement} of a literal $p$ is defined by
$\overline{p} = \neg p$ and $\overline{\neg p} = p$, $\forall p \in \I{AP}$.
If $X \subseteq \Lit(\I{AP})$, then
$\overline{X}= \{ \overline\ell \mid \ell \in X \}$; 
$X$ is {\em consistent} iff 
$\forall$ $\ell \in X$, $\overline{\ell} \notin X$;
and $X$ is {\em $\I{AP}$-maximal}  iff
$\forall$ $p \in \I{AP}$, either $p \in X$ or $\neg p \in X$.

\begin{definition}[syntax] 
Let $\I{AP}$ be a countable set of propositional variables and $A$ a set of agents.
The language $\mathcal{L}_K(A,\I{AP})$ (or just $\mathcal{L}_K$ when the context makes it clear) 
is generated by the following BNF
grammar:
\[
\varphi ::= p \mid \neg\varphi \mid (\varphi \land \varphi) \mid
K_a\varphi
\]
\end{definition}

\begin{definition}[Semantics of formulas]
Consider an epistemic state $(M,s)$ with $M=\la S,\sim,V \ra$ a Kripke model, $s \in S$ and
$\varphi, \psi \in \Lcal_K(A,\I{AP})$.
The satisfaction relation, determining when a formula is true in that 
epistemic state, is defined as:

\begin{tabular}{lrl}
$M,s \models p$ & iff & $p \in L(s)$\\
$M,s \models \neg \varphi$ & iff & $M,s \not\models \varphi$\\
$M,s \models \varphi \wedge \psi$ & iff & $M,s \models \varphi
\mbox{ and } M,s \models \psi$\\
$M,s \models K_a \varphi$ & iff & $\mbox{for all } s' \in S : s
\sim_a s' \Rightarrow M,s' \models \varphi$\\
\end{tabular}
\end{definition}


Hence, 
an agent $a$ is said to know an assertion 
in a state $(M,s)$
iff that assertion of true in all the states it considers possible,
given $s$. 
Therefore, the \emph{knowledge} $K_a$ of an agent $a$ with respect to
a state $s$ is the set of formulas which are true in all states
$a$-accessible from $s$.

\begin{definition}[Language of action model logic]
\emph{
We define the language of action model logic $\Lcal_{\I{KC}\otimes}(A,\I{AP})$ for a set of agents
$A$ and propositional variables $AP$ 
as the set of 
formulas $\varphi \in \Lcal^{\RM{stat}}_{\I{KC}\otimes}(A,\I{AP})$ and of  
actions $\alpha \in \Lcal^{\RM{act}}_{\I{KC}\otimes}(A,\I{AP})$, defined through the following grammar~: 
\begin{align*}
\varphi & ::= p \mid (\neg\varphi) \mid (\varphi \wedge \varphi) \mid
K_a \varphi \mid [\alpha]\varphi\\
\alpha & ::= (\Msf,\ssf) 
\end{align*}
where $p \in P$, $a \in A$.
}
\end{definition}

\begin{definition}[Semantics of formulas and actions]
Consider an epistemic state $(M,s)$ with $M=\la S,\sim,L \ra$ a Kripke model,
an action model $\Msf = \la \Ssf,\sim,\pre \ra$, and
$\varphi \in \Lcal^{\RM{stat}}_{\I{KC}\otimes}(A,P)$, 
$\alpha=(\Msf,\ssf) \in \Lcal^{\RM{act}}_{\I{KC}\otimes}(A,P)$. The satisfaction relation between formulas
and epistemic states is given below, as a set of inductive rules~: 

\begin{tabular}{lrl}
$M,s \models p$ & iff & $p \in L(s)$\\
$M,s \models \neg \varphi$ & iff & $M,s \not\models \varphi$\\
$M,s \models \varphi \wedge \psi$ & iff & $M,s \models \varphi
\mbox{ and } M,s \models \psi$\\
$M,s \models K_a \varphi$ & iff & $\mbox{for all }
s' \in S : $ \\ & & $s
\sim_a s' \mbox{ implies } M,s' \models \varphi$\\
$M,s \models [(\Msf,\ssf)]\varphi$ & iff & 
$M,s \models \pre(\ssf) \mbox{ implies }$\\
& & $(M \otimes \Msf,(s,\ssf)) \models \varphi$ \\
\end{tabular}


The \emph{restricted modal product} $M \otimes \Msf = \la S',\sim',L'
\ra$ is defined as:

\begin{tabular}{lrl}
$S'$ & $=$ & $\{(s,\ssf) \mid s \in S, \ssf \in \Ssf,$ \\
& & \ \ \ $\mbox{ and } M,s
  \models \pre(\ssf)\}$\\
$(s,\ssf) \sim_a' (t,\tsf)$ & iff & $s \sim_a t \mbox{ and } \ssf \sim_a
  \tsf$\\
$p \in L'(s,\ssf)$ & iff & $p \in L(s)$
\end{tabular}

\end{definition}


\section{Combinatorial topology background}
\label{app:combTop}
For a textbook covering combinatorial topology notions see~\cite{kozlov:2007}.


\begin{definition} [Simplicial complex]
A simplicial complex $C$ is a family of non-empty finite subsets of a set $S$ such that
for all $X \in C$, $Y \subseteq X$ implies $Y \in C$ ($C$ is downwards closed).
\end{definition}

Elements of $S$ (identified with singletons) 
are called vertices, elements of $C$ of greater cardinality are called faces. The dimension of a face $X \in C$, $\dim X$, is the cardinality of $X$ minus one. 
The maximal faces of $C$ (i.e. faces that are not subsets of any other face) are called facets. 
The dimension of a simplicial complex is the maximal dimension of its faces. Pure simplicial complexes are simplicial complexes such that the maximal faces are all of the same
dimension. 

\begin{definition} [Simplicial maps] 
Let $C$ and $D$ be two simplicial complexes. A simplicial map $f: C \rightarrow D$ is a function that
maps the vertices of $C$ to the vertices of $D$ such that for all faces $X$ of $C$, $f(C)$ (the image set
on the subset of vertices $C$) is a face of $D$. 
\end{definition}

Now, we can define pure simplicial maps respecting facets. 
Finally, we will associate colors to each vertex of simplicial complexes, representing, as in~\cite{HerlihyKR:2013}, the names of the different processes
involved in a protocol. We also define chromatic simplicial maps as the simplicial
maps respecting colors. This can actually be seen as a slice category constructed
out of the pure simplicial category.

\end{document}